\newtheorem{thm}{Theorem}
\newtheorem{remark}[thm]{Remark}
\newtheorem{lemma}[thm]{Lemma}
\newtheorem{proof}{Proof}
\newcommand{\hA}{\hat{A}}
\newcommand{\hB}{\hat{B}}
\newcommand{\hC}{\hat{C}}
\newcommand{\hD}{\hat{D}}
\newcommand{\hP}{\hat{P}}
\newcommand{\hX}{\hat{X}}
\newcommand{\hY}{\hat{Y}}
\newcommand{\ha}{\hat{a}}
\newcommand{\hc}{\hat{c}}
\newcommand{\hq}{\hat{q}}
\newcommand{\hz}{\hat{z}}
\newcommand{\hPi}{\hat{\Pi}}
\newcommand{\hPhi}{\hat{\Phi}}
\newcommand{\hLambda}{\hat{\Lambda}}
\newcommand{\hrho}{\hat{\rho}}
\newcommand{\htau}{\hat{\tau}}
\newcommand{\mH}{\mathcal{H}}
\newcommand{\mI}{\mathcal{I}}
\newcommand{\mM}{\mathcal{M}}
\newcommand{\mQ}{\mathcal{Q}}
\newcommand{\mS}{\mathcal{S}}
\newcommand{\mZ}{\mathcal{Z}}
\newcommand{\PC}{P_{\rm C}}
\newcommand{\ident}{\hat{1}}
\newcommand{\Real}{\mathbf{R}}
\newcommand{\POVM}{\mM}
\newcommand{\QED}{\hspace*{0pt}\hfill $\blacksquare$}
\newcommand{\Tr}{{\rm Tr}}
\newcommand{\rank}{{\rm rank}}
\newcommand{\supp}{{\rm supp}}
\def\gauss_sym#1{{\lfloor #1 \rfloor}}
\newcommand{\opt}{{\star}}
\newcommand{\PCopt}{{\PC^\opt}}
\renewcommand{\b}{{(b)}}
\renewcommand{\l}{{(l)}}
\newcommand{\fo}{f^\opt}
\newcommand{\go}{g^\opt}
\newcommand{\fL}{\underline{\fo}}
\newcommand{\fU}{\overline{\fo}}
\renewcommand{\S}{{\rm S}}
\renewcommand{\L}{{\rm L}}
\begin{document}

\preprint{APS/123-QED}

\title{Finding optimal solutions for generalized quantum state discrimination problems}%
%\thanks{A footnote to the article title}%

\affiliation{%
 Center for Technology Innovation - Production Engineering,
 Research \& Development Group, Hitachi, Ltd.,
 Yokohama, Kanagawa 244-0817, Japan
}%
\affiliation{
 School of Information Science and Technology,
 Aichi Prefectural University,
 Nagakute, Aichi 480-1198, Japan
}%
\affiliation{
 Quantum Communication Research Center, Quantum ICT Research Institute, Tamagawa University,
 Machida, Tokyo 194-8610, Japan
}%
\affiliation{%
 Quantum Information Science Research Center, Quantum ICT Research Institute,
 Tamagawa University, Machida, Tokyo 194-8610, Japan
}%

\author{Kenji Nakahira}
%\email{kenji.nakahira.kp@hitachi.com}
\affiliation{%
 Center for Technology Innovation - Production Engineering,
 Research \& Development Group, Hitachi, Ltd.,
 Yokohama, Kanagawa 244-0817, Japan
}%
\affiliation{%
 Quantum Information Science Research Center, Quantum ICT Research Institute,
 Tamagawa University, Machida, Tokyo 194-8610, Japan
}%

\author{Tsuyoshi \surname{Sasaki Usuda}}
\affiliation{
 School of Information Science and Technology,
 Aichi Prefectural University,
 Nagakute, Aichi 480-1198, Japan
}%
\affiliation{%
 Quantum Information Science Research Center, Quantum ICT Research Institute,
 Tamagawa University, Machida, Tokyo 194-8610, Japan
}%

\author{Kentaro Kato}
\affiliation{
 Quantum Communication Research Center, Quantum ICT Research Institute, Tamagawa University,
 Machida, Tokyo 194-8610, Japan
}%

\date{\today}% It is always \today, today,
             %  but any date may be explicitly specified

\begin{abstract}
 We try to find an optimal quantum measurement for generalized quantum state discrimination problems,
 which include the problem of finding an optimal measurement maximizing
 the average correct probability with and without a fixed rate of inconclusive results
 and the problem of finding an optimal measurement in the Neyman-Pearson strategy.
 We propose an approach in which the optimal measurement is obtained
 by solving a modified version of the original problem.
 In particular, the modified problem can be reduced to one of finding a minimum error measurement
 for a certain state set, which is relatively easy to solve.
 We clarify the relationship between optimal solutions to the original and modified problems,
 with which one can obtain an optimal solution to the original problem in some cases.
 Moreover, as an example of application of our approach,
 we present an algorithm for numerically obtaining optimal solutions
 to generalized quantum state discrimination problems.
\end{abstract}

% PACS 03.67.Hk: Quantum communication
\pacs{03.67.Hk}% PACS, the Physics and Astronomy
                             % Classification Scheme.
%\keywords{Suggested keywords}%Use showkeys class option if keyword
                              %display desired
\maketitle

%\tableofcontents

\section{Introduction}

A fundamental issue in quantum mechanics is that
there is no way to discriminate perfectly between non-orthogonal quantum states,
and indeed discrimination between quantum states has become a crucial task
in quantum information theory.
The object of this task is to distinguish between a given finite set of known quantum states
with given prior probabilities as well as possible.
This task can be viewed as finding a quantum measurement
that minimizes or maximizes a certain optimality criterion.
Several optimality criteria have been suggested
since the basic framework of quantum state discrimination was established by the pioneering work
of Helstrom, Holevo, and Yuen {\it et al.} \cite{Hol-1973,Hel-1976,Yue-Ken-Lax-1975}.

A minimum error measurement is one that maximizes the average correct probability,
and it is the most intensively investigated.
In particular, necessary and sufficient conditions for a minimum error measurement
have been formulated \cite{Hol-1973,Hel-1976,Yue-Ken-Lax-1975,Eld-Meg-Ver-2003},
and closed-form analytical expressions have been derived
for some classes of quantum state sets (see e.g.,
\cite{Bel-1975,Ban-Kur-Mom-Hir-1997,Usu-Tak-Hat-Hir-1999,Eld-For-2001}).
Another kind of measurement, called unambiguous measurement,
achieves error-free, i.e., unambiguous, discrimination
at the expense of allowing for a certain rate of inconclusive answers
\cite{Iva-1987,Die-1988,Per-1988}.
An unambiguous measurement that maximizes the average correct probability
is called optimal, and a closed-form analytical expression has been obtained
for some cases (see, e.g.,
\cite{Eld-2003-unamb,Her-2007,Pan-Wu-2009,Kle-Kam-Bru-2010,Roa-Her-Sal-Kli-2011,Ber-Fut-Fel-2012,Ban-2014}).

In addition to the minimum error and optimal unambiguous measurements,
several other kinds of quantum measurements have been studied;
for example,
an optimal inconclusive measurement
\cite{Che-Bar-1998-inc,Eld-2003-inc,Fiu-Jez-2003,Her-2012,Nak-Usu-Kat-2012-GUInc,Bag-Mun-Oli-Ber-2012,Her-2015-inc},
an optimal error margin measurement \cite{Tou-Ada-Ste-2007,Hay-Has-Hor-2008,Sug-Has-Hor-Hay-2009},
and an optimal measurement in the Neyman-Pearson strategy \cite{Hel-1976,Hol-1982-Prob,Par-1997}.
Recently, generalized quantum state discrimination problems,
which include any problems related to finding any of the optimal measurements described above,
have been investigated,
and necessary and sufficient conditions for an optimal measurement
have also been formulated \cite{Nak-Kat-Usu-2015-general}.
However, thus far, obtaining a closed-form analytical expression
appears to be a very difficult task.
Moreover, an efficient numerical algorithm for solving such problems has not yet been found.

In this article, we try to find analytical or numerical optimal solutions
to generalized quantum state discrimination problems.
We consider an extension of the method developed in Ref.~\cite{Nak-Kat-Usu-2015-inc}.
The authors of that paper developed a corresponding modified version of an optimal
inconclusive measurement that maximizes an objective function which is the weighted sum of
the average correct and inconclusive probabilities.
In this paper, we investigate a modified version of a generalized problem.
As we will show later,
finding an optimal solution to the modified problem is relatively easy,
since it can be reduced to one of finding a minimum error measurement for a certain state set.
Thus, the modified problem is often useful for solving the original generalized problem.
In Sec.~\ref{sec:settings}, we give a brief overview of
generalized quantum state discrimination problems.
In Sec.~\ref{sec:mod}, we present the corresponding modified problem
and clarify the relationship between optimal solutions to the original and modified problems.
We also claim that in the two-dimensional cases
one can obtain an optimal solution to the original problem from the solution to the modified problem.
In Sec.~\ref{sec:numerical},
we propose a numerical algorithm for solving the original problem
by exploiting the modified one.

\section{Generalized quantum state discrimination problem} \label{sec:settings}

A quantum measurement can be described as a positive operator-valued measure (POVM)
with $M$ detection operators, $\Pi = \{ \hPi_m : m \in \mI_M \}$,
where $\mI_k = \{ 0, 1, \cdots, k-1 \}$.
Let $\mH$ be a complex Hilbert space and $\POVM_M$ be the set of POVMs on $\mH$
whose element consists of $M$ detection operators.
Each $\Pi \in \POVM_M$ satisfies
\begin{eqnarray}
 \hPi_m &\ge& 0, ~~~ \forall m \in \mI_M, \nonumber \\
 \sum_{m=0}^{M-1} \hPi_m &=& \ident, \label{eq:POVM}
\end{eqnarray}
where $\ident$ is the identity operator on $\mH$.
$\hA \ge 0$ and $\hA \ge \hB$ respectively denote that
$\hA$ and $\hA - \hB$ are positive semidefinite.

Let $\mS$ and $\mS_+$ be respectively the sets of Hermitian operators on $\mH$
and semidefinite positive operators on $\mH$.
Let $\Real$ and $\Real_+$ be respectively the sets of real numbers
and nonnegative real numbers.
Also, let $\Real^n$ and $\Real_+^n$ be respectively the sets of collections of
$n$ real numbers and $n$ nonnegative real numbers.

A broad class of optimization problems regarding quantum measurements,
including ones of finding a minimum error measurement and
an optimal unambiguous measurement,
can be formulated as follows \cite{Nak-Kat-Usu-2015-general}:
\begin{eqnarray}
 \begin{array}{lll}
  {\rm P1:} & {\rm maximize} & \displaystyle f(\Pi) = \sum_{m=0}^{M-1} \Tr(\hc_m \hPi_m) \\
  & {\rm subject~to} & \Pi \in \POVM_M^\b, \\
 \end{array} \label{eq:main}
\end{eqnarray}
where $\hc_m \in \mS$ for any $m \in \mI_M$.
$\POVM_M^\b$ with $b = \{ b_j \in \Real : j \in \mI_J \} \in \Real^J$ is defined as:
\begin{eqnarray}
 \POVM_M^\b &=& \left\{ \Pi \in \POVM_M : \beta_j(\Pi) \ge b_j,
				 ~ \forall j \in \mI_J \right\}, \nonumber \\
 \beta_j(\Pi) &=& \sum_{m=0}^{M-1} \Tr(\ha_{j,m} \hPi_m),
  \label{eq:POVMmm}
\end{eqnarray}
where $\ha_{j,m} \in \mS$ $~(\forall j \in \mI_J, m \in \mI_M)$,
and $J$ is the number of constraints expressed in the form $\beta_j(\Pi) \ge b_j$.
Problem~P1 is referred to as a generalized quantum state discrimination problem.
Without loss of generality, we will let $\mH$ be the complex Hilbert space spanned by
the supports of the operators $\{ \hc_m : m \in \mI_M \}$ and $\{ \ha_{j,m} : j \in \mI_J, m \in \mI_M \}$.
Note that to simplify the discussion,
we have reversed the sign of the inequality in Eq.~(\ref{eq:POVMmm}) with respect to
that in Ref.~\cite{Nak-Kat-Usu-2015-general}.
A POVM $\Pi$ that satisfies the constraints is referred to as a feasible solution.
Moreover, the set of all feasible solutions is called a feasible set.
The feasible set of problem~P1 is $\POVM_M^\b$.

As an example, let us consider the problem of finding a minimum error measurement.
Suppose we want to discriminate between $R$ quantum states
represented by density operators $\hrho_r$ $~(r \in \mI_R)$
with prior probabilities $\xi_r$.
$\hrho_r$ satisfies $\hrho_r \ge 0$ and $\Tr~\hrho_r = 1$.
A minimum error measurement $\Pi \in \POVM_R$ for $R$ quantum states
$\rho = \{ \hrho_r : r \in \mI_R \}$ with prior probabilities $\xi = \{ \xi_r : r \in \mI_R \}$
can be characterized as an optimal solution to the following optimization problem:
\begin{eqnarray}
 \begin{array}{ll}
  {\rm maximize} & \displaystyle \PC(\Pi) = \sum_{r=0}^{R-1} \xi_r \Tr(\hrho_r \hPi_r) \\
  {\rm subject~to} & \Pi \in \POVM_R. \\
 \end{array} \label{eq:me_main}
\end{eqnarray}
$\PC(\Pi)$ is called the average correct probability.
This problem is equivalent to problem~P1 with
$M = R$, $J = 0$, and $\hc_m = \xi_m \hrho_m$ $~(m \in \mI_R)$.

In the remainder of this paper, we assume that
$\hc_m \in \mS_+$ and $\ha_{j,m} \in \mS_+$ for any $j \in \mI_J$ and $m \in \mI_M$.
We show here that this involves no loss of generality.
For given $\{ \hc_m \in \mS : m \in \mI_M \}$, we choose $\htau_{\rm c} \in \mS$
such that $\hc_m + \htau_{\rm c} \ge 0$.
Similarly, for given $\{ \ha_{j,m} \in \mS : m \in \mI_M \}$,
we choose $\htau_j \in \mS$ such that $\ha_{j,m} + \htau_j \ge 0$.
Let
\begin{eqnarray}
 \hc'_m &=& \hc_m + \htau_{\rm c}, \nonumber \\
 \ha'_{j,m} &=& \ha_{j,m} + \htau_j;
\end{eqnarray}
$\hc'_m \ge 0$ and $\ha'_{j,m} \ge 0$ obviously hold.
We have
\begin{eqnarray}
 \sum_{m=0}^{M-1} \Tr(\hc'_m \hPi_m) &=& \sum_{m=0}^{M-1} \Tr(\hc_m \hPi_m) + \Tr~\htau_{\rm c},
  \nonumber \\
 \sum_{m=0}^{M-1} \Tr(\ha'_{j,m} \hPi_m) &=& \sum_{m=0}^{M-1} \Tr(\ha_{j,m} \hPi_m)
  + \Tr~\htau_j.
\end{eqnarray}
Thus, we can easily see that an optimal solution to problem~P1 does not change
even if we replace $\hc'_m$, $\ha'_{j,m}$, and $b_j + \Tr~\htau_j$ with
$\hc_m$, $\ha_{j,m}$, and $b_j$, respectively.

Let $\fo(b)$ be the optimal value of problem~P1 as a function of $b \in \Real^J$
if $\POVM_M^\b$ is not empty; otherwise, $\fo(b) = -\infty$.
$\fo(b)$ can be expressed as
\begin{eqnarray}
 \fo(b) &=&
  \left\{
   \begin{array}{ll}
	\displaystyle \max_{\Pi \in \POVM_M^\b} f(\Pi), \\
	-\infty, & ~ {\rm otherwise}.
   \end{array} \right. \label{eq:fo}
\end{eqnarray}
We can easily verify that 
if $b, b' \in \Real^J$ satisfies $b_j \le b'_j$ for any $j \in \mI_J$,
then $\POVM_M^\b \supseteq \POVM_M^{(b')}$ holds, which gives $\fo(b) \ge \fo(b')$.
In addition, since $\Tr(\ha_{j,m}\hPi_m) \ge 0$ always holds,
the problem is equivalent to that in which $b_j$ with $b_j < 0$
is replaced with 0;
thus, for any $b \in \Real^J$, $\fo(b) = \fo(b')$ holds,
where $b' \in \Real_+^J$ satisfies $b'_j = \max(b_j, 0)$ for any $j \in \mI_J$.

The dual problem of problem~P1 can be expressed as \cite{Nak-Kat-Usu-2015-general}
\begin{eqnarray}
 \begin{array}{lll}
  {\rm DP1:} & {\rm minimize} & \displaystyle s(\hX, \lambda) = \Tr~\hX - \lambda \cdot b \\
  & {\rm subject~to} & \hX \in \mS_\lambda \\
 \end{array} \label{eq:dual}
\end{eqnarray}
with variables $\hX \in \mS_+$ and $\lambda = \{ \lambda_j \in \Real_+ : j \in \mI_J \} \in \Real_+^J$,
where $\lambda \cdot b$ denotes $\sum_{j=0}^{J-1} \lambda_j b_j$ and
\begin{eqnarray}
 \mS_\lambda &=& \{ \hX \in \mS_+ : \hX \ge \hz_m(\lambda), ~ \forall m \in \mI_M \}, \label{eq:S_lambda} \\
 \hz_m(\lambda) &=& \hc_m + \sum_{j=0}^{J-1} \lambda_j \ha_{j,m}. \label{eq:zm}
\end{eqnarray}
$\hz_m(\lambda) \in \mS_+$ obviously holds for any $\lambda \in \Real_+^J$
since $\hc_m \ge 0$ and $\ha_{j,m} \ge 0$.
If $\POVM_M^\b$ is not empty, then
the optimal values of problems~P1 and D1 are the same
(see Ref.~\cite{Nak-Kat-Usu-2015-general} Theorem~1).

\section{Modified version of generalized quantum state discrimination problem} \label{sec:mod}

Necessary and sufficient conditions have been formulated for an optimal solution
(i.e., a minimum error measurement) to problem~(\ref{eq:me_main}),
and closed-form analytical expressions for minimum error measurements have also been derived
for several quantum state sets.
Similarly, necessary and sufficient conditions have been derived
for an optimal solution to problem~P1 \cite{Nak-Kat-Usu-2015-general}.
However, obtaining an optimal solution to problem~P1 is often a more difficult task than
obtaining a minimum error measurement.
In this section, we consider a modified optimization problem.
We claim that, in some cases, solving the modified problem is easier than
directly solving problem~P1.

\subsection{Formulation}

The main reason why it is difficult to obtain an analytic solution to problem~P1
in general is that the constraints are more complicated than
those of finding a minimum error measurement.
Let us consider the following problem:
\begin{eqnarray}
 \begin{array}{lll}
  {\rm P2:} & {\rm maximize} & \displaystyle g(\Pi; \lambda) = \sum_{m=0}^{M-1} \Tr[\hz_m(\lambda) \hPi_m] \\
  & {\rm subject~to} & \Pi \in \POVM_M, \\
 \end{array} \label{eq:main_mod}
\end{eqnarray}
where $\lambda \in \Real_+^J$ is constant, and $\hz_m(\lambda)$ is defined in Eq.~(\ref{eq:zm}).
We call this problem the modified problem of problem~P1.
We can easily see that it can also be formulated as
a generalized quantum state discrimination problem \cite{Nak-Kat-Usu-2015-general},
and thus the dual problem is expressed as
\begin{eqnarray}
 \begin{array}{lll}
  {\rm DP2:} & {\rm minimize} & \displaystyle \Tr~\hX \\
  & {\rm subject~to} & \hX \in \mS_\lambda  \\
 \end{array} \label{eq:dual_mod}
\end{eqnarray}
with variables $\hX \in \mS_+$.
The optimal values of problems~P2 and DP2 are the same.
Let $\go(\lambda)$ be the function of $\lambda \in \Real^J$ defined by
\begin{eqnarray}
 \go(\lambda) &=&
  \left\{
   \begin{array}{ll}
	\displaystyle \max_{\Pi \in \POVM_M} g(\Pi; \lambda), & ~ \lambda \in \Real_+^J, \\
	\infty, & ~ {\rm otherwise};
   \end{array} \right. \label{eq:gopt}
\end{eqnarray}
i.e., $\go(\lambda)$ is the optimal value of problem~P2
in the case of $\lambda \in \Real_+^J$; otherwise, $\infty$.
If $\lambda, \lambda' \in \Real_+^J$ satisfies $\lambda_j \ge \lambda'_j$
for any $j \in \mI_J$, then $\hz_m(\lambda) \ge \hz_m(\lambda')$ holds
for any $m \in \mI_M$, and thus $\go(\lambda) \ge \go(\lambda')$ holds.

\subsection{Relationship between problems~P1 and P2}

In this subsection, we discuss the relationship between
problems~P1 and P2.
First, we show that $\fo(b)$ and $\go(\lambda)$ have the following property:

\begin{thm} \label{thm:Legendre}
 $-\fo(b)$ and $\go(\lambda)$ are convex.
 Moreover, $-\fo(b)$ is the Legendre transformation of $\go(\lambda)$ and vice versa.
\end{thm}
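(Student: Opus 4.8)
The plan is to prove the two Legendre (convex-conjugate) identities directly; convexity of both functions then follows for free, since a convex conjugate is always convex. First I would introduce the vector $\beta(\Pi)=(\beta_0(\Pi),\dots,\beta_{J-1}(\Pi))$ and decompose the objective of P2 by means of Eq.~(\ref{eq:zm}):
\begin{equation}
 g(\Pi;\lambda)=\sum_{m=0}^{M-1}\Tr[\hz_m(\lambda)\hPi_m]=f(\Pi)+\lambda\cdot\beta(\Pi).
\end{equation}
Thus $\go(\lambda)=\max_{\Pi\in\POVM_M}[f(\Pi)+\lambda\cdot\beta(\Pi)]$ for $\lambda\in\Real_+^J$, and $\Pi\in\POVM_M^\b$ is equivalent to $\Pi\in\POVM_M$ together with $b\le\beta(\Pi)$ componentwise.

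Next I would show that $\go$ is the Legendre transform of $-\fo$, i.e.\ $\go(\lambda)=\sup_{b\in\Real^J}[\lambda\cdot b+\fo(b)]$. Because infeasible $b$ contribute $\fo(b)=-\infty$, the supremum runs effectively over feasible $b$; using $\fo(b)=\max_{\Pi\in\POVM_M^\b}f(\Pi)$ and interchanging the two suprema gives
\begin{equation}
 \sup_{b}[\lambda\cdot b+\fo(b)]=\sup_{\Pi\in\POVM_M}\ \sup_{b\le\beta(\Pi)}[\lambda\cdot b+f(\Pi)].
\end{equation}
The inner supremum factorizes over the components of $b$: for $\lambda\in\Real_+^J$ it equals $\lambda\cdot\beta(\Pi)+f(\Pi)$, attained at $b=\beta(\Pi)$, which reproduces $\go(\lambda)$ exactly; if some $\lambda_j<0$ it is $+\infty$, matching $\go(\lambda)=\infty$ for $\lambda\notin\Real_+^J$. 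This proves the identity for every $\lambda$, and since a conjugate is convex, $\go$ is convex.

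I would then establish the reverse identity $-\fo(b)=\sup_{\lambda}[\lambda\cdot b-\go(\lambda)]$, which is where strong duality enters and which I expect to be the crux. The idea is to factor the minimization in DP1: for fixed $\lambda\in\Real_+^J$, minimizing $\Tr\hX-\lambda\cdot b$ over $\hX\in\mS_\lambda$ amounts to minimizing $\Tr\hX$ over $\mS_\lambda$ and subtracting $\lambda\cdot b$, and the former is exactly the optimal value of DP2, which equals $\go(\lambda)$ because the optimal values of P2 and DP2 coincide. Hence the optimal value of DP1 equals $\min_{\lambda\in\Real_+^J}[\go(\lambda)-\lambda\cdot b]$, and by Theorem~1 of Ref.~\cite{Nak-Kat-Usu-2015-general} (no duality gap between P1 and DP1 whenever $\POVM_M^\b\neq\emptyset$) this equals $\fo(b)$. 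Therefore
\begin{equation}
 -\fo(b)=\sup_{\lambda\in\Real_+^J}[\lambda\cdot b-\go(\lambda)]=\sup_{\lambda\in\Real^J}[\lambda\cdot b-\go(\lambda)],
\end{equation}
the last step because $\go=\infty$ off $\Real_+^J$. This is the asserted Legendre transform, and convexity of $-\fo$ follows for free.

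The two displayed conjugacy relations are precisely the statement that $-\fo$ and $\go$ are Legendre transforms of each other, with convexity of both as a byproduct, so this completes the proof. The main obstacle is the second identity, where one must correctly identify the inner $\hX$-minimization in DP1 with $\go(\lambda)$ and invoke strong duality for P1. Some care is also needed with the extended-real values: for infeasible $b$ one should verify that $\fo(b)=-\infty$ is consistent with the supremum equalling $+\infty$. This holds because the feasible set $\{b:\POVM_M^\b\neq\emptyset\}$ equals the downward closure of the compact convex set $\{\beta(\Pi):\Pi\in\POVM_M\}$, hence is closed and convex; an infeasible $b$ can then be strictly separated from it by some $\lambda\in\Real_+^J$, and scaling that $\lambda$ drives $\lambda\cdot b-\go(\lambda)$ to $+\infty$.
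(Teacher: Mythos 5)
Your proposal is correct, and it reaches the two conjugacy identities by a partly different route than the paper. The paper first proves convexity of $\go$ directly on the dual side, by mixing optimal dual operators: if $\hX'$ and $\hX''$ are optimal for DP2 at $\lambda'$ and $\lambda''$, then $t\hX'+(1-t)\hX''$ is feasible at $t\lambda'+(1-t)\lambda''$ because $\hz_m(\lambda)$ is affine in $\lambda$, giving the convexity inequality; it then proves only the identity $-\fo(b)=\max_\lambda[\lambda\cdot b-\go(\lambda)]$ via strong duality (exactly as in your third step), and obtains the reverse identity and the convexity of $-\fo$ by appealing to biconjugation of closed convex functions \cite{Hir-Lem-2001}. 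You instead prove the identity $\go(\lambda)=\sup_b[\lambda\cdot b+\fo(b)]$ directly by a purely primal exchange of suprema, using $g(\Pi;\lambda)=f(\Pi)+\lambda\cdot\beta(\Pi)$ and the fact that $\sup_{b\le\beta(\Pi)}\lambda\cdot b=\lambda\cdot\beta(\Pi)$ for $\lambda\in\Real_+^J$; this is more elementary (no dual operators needed for that half), yields convexity of $\go$ for free, and explains why the extension $\go=+\infty$ off $\Real_+^J$ is exactly what makes the conjugacy exact. Your second half coincides with the paper's argument. One genuine improvement on your side: you explicitly verify the infeasible-$b$ case, showing via separation from the downward closure of $\{\beta(\Pi)\}$ that $\sup_\lambda[\lambda\cdot b-\go(\lambda)]=+\infty$ there, a point the paper passes over silently since its duality identity is stated only for $\POVM_M^\b\neq\emptyset$. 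The only mild caveat in your write-up is that deriving the second identity from the first by conjugating both sides would require knowing that $-\fo$ is closed convex, so it is right that you prove it independently through duality rather than by biconjugation.
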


\begin{proof}
 First, we prove that $\go(\lambda)$ is convex.
 From Eq.~(\ref{eq:gopt}), it suffices to show that $\go(\lambda)$ is convex
 in the range of $\lambda \in \Real_+^J$.
 Let $\hX'$ and $\hX''$ be respectively optimal solutions to problem~DP2
 in the case of $\lambda = \lambda' \in \Real_+^J$ and $\lambda = \lambda'' \in \Real_+^J$,
 which means that $\go(\lambda') = \Tr~\hX'$, $\hX' \in \mS_{\lambda'}$,
 $\go(\lambda'') = \Tr~\hX''$, and $\hX'' \in \mS_{\lambda''}$ hold.
 Let $t \in \Real_+$ with $0 \le t \le 1$, $\hY = t \hX' + (1-t) \hX''$,
 and $\eta = \{ \eta_j = t\lambda'_j + (1-t)\lambda''_j : j \in \mI_J \}$.
 $\eta \in \Real_+^J$ obviously holds.
 For any $m \in \mI_M$, we have
 \begin{eqnarray}
  \hY &=& t \hX' + (1-t) \hX'' \nonumber \\
  &\ge& t \hz_m(\lambda') + (1-t) \hz_m(\lambda'') = \hz_m(\eta); \label{eq:hY_ge_z}
 \end{eqnarray}
 i.e., $\hY \in \mS_\eta$.
 The last equality of Eq.~(\ref{eq:hY_ge_z}) follows from the definition of $\hz_m(\lambda)$
 in Eq.~(\ref{eq:zm}).
 Thus, $\go(\eta) \le \Tr~\hY$ holds from the definition of $\go(\eta)$.
 Therefore, we obtain
 \begin{eqnarray}
  \go(\eta) &\le& \Tr~\hY = t \Tr~\hX' + (1-t) \Tr~\hX'' \nonumber \\
  &=& t \go(\lambda') + (1-t) \go(\lambda''),
 \end{eqnarray}
 which indicates that $\go(\lambda)$ is convex.

 Next, we prove that $-\fo(b)$ is the Legendre transformation of $\go(\lambda)$.
 Let $\hX_\lambda$ be an optimal solution to problem~DP2
 as a function of $\lambda \in \Real_+^J$.
 $\go(\lambda) = \Tr~\hX_\lambda$ holds in the case of $\lambda \in \Real_+^J$.
 Since the minimum $s(\hX, \lambda)$ in Eq.~(\ref{eq:dual}) equals
 the optimal value $\fo(b)$ of problem~P1, we obtain
 \begin{eqnarray}
  - \fo(b) &=& - \min_{\lambda \in \Real_+^J} \min_{\hX \in \mS_\lambda}
   (\Tr~\hX - \lambda \cdot b) \nonumber \\
  &=& - \min_{\lambda \in \Real_+^J} (\Tr~\hX_\lambda - \lambda \cdot b) \nonumber \\
  &=& - \min_{\lambda \in \Real_+^J} \left[ \go(\lambda) - \lambda \cdot b \right] \nonumber \\
  &=& \max_{\lambda \in \Real^J} \left[ \lambda \cdot b - \go(\lambda) \right], \label{eq:fo_go}
 \end{eqnarray}
 where the second line follows from $\min_{\hX \in \mS_\lambda} \Tr~\hX = \Tr~\hX_\lambda$,
 which is given by the definition of $\hX_\lambda$.
 From Eq.~(\ref{eq:fo_go}),
 $-\fo(b)$ is the Legendre transformation of $\go(\lambda)$.

 We can obviously see that $-\fo(b)$ is convex
 and $\go(\lambda)$ is the Legendre transformation of $-\fo(b)$
 since $-\fo(b)$ is the Legendre transformation of the convex function $\go(\lambda)$ \cite{Hir-Lem-2001}.
 \QED
\end{proof}

Next, we discuss the relationship between optimal solutions to
problems~P1 and P2.

\begin{thm} \label{thm:opt_mod}
 The following statements hold:
 \begin{enumerate}[(1)]
  \setlength{\parskip}{0cm}
  \setlength{\itemsep}{0cm}
  \item An optimal solution to problem~P1 with respect to $b \in \Real^J$
		is an optimal solution to problem~P2 with respect to $\lambda \in \partial[-\fo(b)]$,
		where $\partial[-\fo(b)]$ is the subdifferential of $-\fo(b)$, i.e.,
		\begin{eqnarray}
		 \partial[-\fo(b)] &=& \{ \lambda \in \Real_+^J : -\fo(b') + \fo(b) \nonumber \\
		 & & ~~ \ge \lambda \cdot b' - \lambda \cdot b, ~ \forall b' \in \Real^J \}. \label{eq:partial_fo}
		\end{eqnarray}
  \item An optimal solution, denoted by $\Pi^\bullet$, to problem~P2
		with respect to $\lambda \in \Real_+^J$ is
		an optimal solution to problem~P1 with respect to $b \in \Real^J$
		if
		\begin{eqnarray}
		 \beta_j(\Pi^\bullet) &\ge& b_j, \nonumber \\
		 \lambda_j [\beta_j(\Pi^\bullet) - b_j] &=& 0 \label{eq:pro_opt_mod2}
		\end{eqnarray}
		hold for any $j \in \mI_J$ (note that $\beta_j$ is defined by Eq.~(\ref{eq:POVMmm})).
 \end{enumerate}
\end{thm}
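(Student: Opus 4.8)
The whole argument turns on one observation: the objective of P2 is precisely the Lagrangian of P1. Inserting the definition of $\hz_m(\lambda)$ from Eq.~(\ref{eq:zm}) and using Eq.~(\ref{eq:POVMmm}), I would first record the identity
\begin{eqnarray}
 g(\Pi;\lambda) &=& \sum_{m=0}^{M-1}\Tr(\hc_m\hPi_m) + \sum_{j=0}^{J-1}\lambda_j\sum_{m=0}^{M-1}\Tr(\ha_{j,m}\hPi_m) \nonumber \\
 &=& f(\Pi) + \sum_{j=0}^{J-1}\lambda_j\beta_j(\Pi), \label{eq:lagrangian}
\end{eqnarray}
valid for every $\Pi\in\POVM_M$ and every $\lambda\in\Real_+^J$. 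With Eq.~(\ref{eq:lagrangian}) in hand, both assertions become standard Lagrangian/KKT statements, and the convex duality of Theorem~\ref{thm:Legendre} provides the one extra ingredient needed for part~(1).

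For part~(1), the plan is to reinterpret the subdifferential as a Fenchel equality. Unpacking Eq.~(\ref{eq:partial_fo}), $\lambda\in\partial[-\fo(b)]$ says exactly that $b$ maximizes $\fo(b')+\lambda\cdot b'$ over $b'\in\Real^J$; since Theorem~\ref{thm:Legendre} gives $\go(\lambda)=\max_{b'}[\lambda\cdot b'+\fo(b')]$, this is equivalent to the equality $\go(\lambda)=\fo(b)+\lambda\cdot b$. Now let $\Pi^\star$ be optimal for P1, so $f(\Pi^\star)=\fo(b)$ and $\beta_j(\Pi^\star)\ge b_j$ for all $j$. Using Eq.~(\ref{eq:lagrangian}), then $\lambda_j\ge0$ together with feasibility,
\begin{eqnarray}
 g(\Pi^\star;\lambda) &=& \fo(b) + \sum_{j=0}^{J-1}\lambda_j\beta_j(\Pi^\star) \nonumber \\
 &\ge& \fo(b) + \lambda\cdot b = \go(\lambda). \nonumber
\end{eqnarray}
Because $\Pi^\star\in\POVM_M$ forces $g(\Pi^\star;\lambda)\le\go(\lambda)$, equality holds and $\Pi^\star$ solves P2.

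For part~(2), let $\Pi^\bullet$ be optimal for P2 at the given $\lambda$ and satisfy the two conditions of Eq.~(\ref{eq:pro_opt_mod2}); the first, $\beta_j(\Pi^\bullet)\ge b_j$, already makes $\Pi^\bullet$ feasible for P1. For an arbitrary feasible $\Pi\in\POVM_M^\b$, optimality of $\Pi^\bullet$ for P2 and Eq.~(\ref{eq:lagrangian}) give $f(\Pi^\bullet)+\sum_j\lambda_j\beta_j(\Pi^\bullet)\ge f(\Pi)+\sum_j\lambda_j\beta_j(\Pi)$. The complementary-slackness condition $\lambda_j[\beta_j(\Pi^\bullet)-b_j]=0$ reduces the left sum to $\lambda\cdot b$, while $\lambda_j\ge0$ and $\beta_j(\Pi)\ge b_j$ bound the right sum below by $\lambda\cdot b$; cancelling $\lambda\cdot b$ leaves $f(\Pi^\bullet)\ge f(\Pi)$. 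As this holds for every feasible $\Pi$ and $\Pi^\bullet$ is itself feasible, $\Pi^\bullet$ is optimal for P1.

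The two inequalities driven by $\lambda\ge0$ and feasibility are routine; I expect the only delicate point to be orienting the Legendre duality of Theorem~\ref{thm:Legendre} correctly, so that $\go(\lambda)=\max_{b'}[\lambda\cdot b'+\fo(b')]$ and hence the Fenchel--Young inequality $-\fo(b)+\go(\lambda)\ge\lambda\cdot b$ holds with equality exactly on $\partial[-\fo(b)]$. Once that identity is secured, no further convex analysis is required; in particular, if $\partial[-\fo(b)]$ happens to be empty, part~(1) is vacuous and needs no separate handling.
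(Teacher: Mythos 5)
Your proposal is correct and follows essentially the same route as the paper: part (1) rests on the Legendre/Fenchel identity $\go(\lambda)=\fo(b)+\lambda\cdot b$ for $\lambda\in\partial[-\fo(b)]$ combined with feasibility of the P1 optimum, and part (2) is the standard complementary-slackness argument via the Lagrangian identity $g(\Pi;\lambda)=f(\Pi)+\sum_j\lambda_j\beta_j(\Pi)$. The only cosmetic difference is that you prove part (2) directly while the paper phrases it as a proof by contradiction; the underlying inequalities are identical.
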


\begin{proof}
 (1) Let $\Pi^\opt$ be an optimal solution to problem~P1 with respect to
 $b \in \Real^J$.
 For any $\lambda \in \partial[-\fo(b)]$, we have
 \begin{eqnarray}
  \go(\lambda) &=& \max_{b' \in \Real^J} [ \lambda \cdot b' + \fo(b') ] \nonumber \\
  &=& \lambda \cdot b + \fo(b) \nonumber \\
  &=& \lambda \cdot b + \sum_{m=0}^{M-1} \Tr(\hc_m \hPi_m^\opt) \nonumber \\
  &\le& \sum_{m=0}^{M-1} \Tr[\hz_m(\lambda) \hPi_m^\opt] \nonumber \\
  &=& g(\Pi^\opt; \lambda),
 \end{eqnarray}
 where the first line follows from $\go(\lambda)$ being the Legendre transformation of $-\fo(b)$.
 The second line follows from $\lambda \cdot b + \fo(b) \ge \lambda \cdot b' + \fo(b')$
 for any $b' \in \Real^J$, which is given from Eq.~(\ref{eq:partial_fo}).
 The fourth line follows from $\Pi^\opt \in \POVM_M^\b$ (i.e., $\beta_j(\Pi^\opt) \ge b_j, ~ \forall j \in \mI_J$)
 and Eq.~(\ref{eq:zm}).
 In contrast, from the definition of $\go(\lambda)$, $\go(\lambda) \ge g(\Pi^\opt; \lambda)$ holds.
 Thus, $\go(\lambda) = g(\Pi^\opt; \lambda)$ holds,
 which means that $\Pi^\opt$ is an optimal solution to problem~P2.

 (2) Let us consider $b$ satisfying Eq.~(\ref{eq:pro_opt_mod2}).
 Suppose, by contradiction, that $\Pi^\bullet$ is not an optimal solution to
 problem~P1 with respect to $b$.
 Since $\beta_j(\Pi^\bullet) \ge b_j$ holds for any $j \in \mI_J$,
 $\Pi^\bullet \in \POVM_M^\b$ holds.
 Let $\Pi'$ be an optimal solution to problem~P1 with respect to $b$;
 then, $\Pi' \in \POVM_M^\b$ and $f(\Pi') > f(\Pi^\bullet)$ hold.
 Thus, we obtain
 \begin{eqnarray}
  g(\Pi'; \lambda) &=& f(\Pi') + \sum_{j=0}^{J-1} \lambda_j \beta_j(\Pi')
   \nonumber \\
  &>& f(\Pi^\bullet) + \sum_{j=0}^{J-1} \lambda_j b_j \nonumber \\
  &=& f(\Pi^\bullet) + \sum_{j=0}^{J-1} \lambda_j \beta_j(\Pi^\bullet)
   \nonumber \\
  &=& g(\Pi^\bullet; \lambda). \label{eq:opt_mod_g}
 \end{eqnarray}
 This contradicts the assumption that
 $\Pi^\bullet$ is an optimal solution to problem~P2 with respect to $\lambda$,
 i.e., $g(\Pi^\bullet; \lambda) \ge g(\Pi'; \lambda)$ for any $\Pi' \in \POVM_M$.
 Therefore, $\Pi^\bullet$ is an optimal solution to problem~P1 with respect to $b$.
 \QED
\end{proof}

\subsection{Derivation of an optimal solution using modified problem}

As we will show below, problem~P2 can be reduced to
one of finding a minimum error measurement for a certain state set.
Thus, sometimes, problem~P2 can be used to solve problem~P1
and is easier than directly solving it.

\begin{remark} \label{remark:mod_me}
 Let us consider problem~P2 with respect to $\lambda \in \Real_+^J$.
 $\Pi \in \POVM_M$ is an optimal solution to problem~P2
 if and only if $\Pi$ is a minimum error measurement for $M$ quantum states
 $\rho = \{ \hrho_m : m \in \mI_M \}$
 with prior probabilities $\xi = \{ \xi_m : m \in \mI_M \}$, where
 \begin{eqnarray}
  \xi_m &=& \frac{\Tr~\hz_m(\lambda)}{\displaystyle \sum_{k=0}^{M-1} \Tr~\hz_k(\lambda)}, ~~~
   \hrho_m = \frac{\hz_m(\lambda)}{\Tr~\hz_m(\lambda)}. \label{eq:mod_me}
 \end{eqnarray}
\end{remark}

\begin{proof}
 Let $C = [\sum_{k=0}^{M-1} \Tr~\hz_k(\lambda)]^{-1}$.
 The average correct probability $\PC(\Pi)$, which is defined by Eq.~(\ref{eq:me_main}),
 can be represented as
 \begin{eqnarray}
  \PC(\Pi) &=& \sum_{m=0}^{M-1} \xi_m \Tr(\hrho_m \hPi_m) \nonumber \\
  &=& C \sum_{m=0}^{M-1} \Tr[\hz_m(\lambda) \hPi_m] = C g(\Pi; \lambda).
 \end{eqnarray}
 Thus, finding a $\Pi$ that maximizes $\PC(\Pi)$ is equivalent to
 finding a $\Pi$ that maximizes $g(\Pi; \lambda)$.
 \QED
\end{proof}

Closed-form analytical expressions of minimum error measurements
have been obtained for several classes of quantum state sets
(e.g., \cite{Bel-1975,Ban-Kur-Mom-Hir-1997,Usu-Tak-Hat-Hir-1999,Eld-For-2001,
Usu-Usa-Tak-Hat-2002,And-Bar-Gil-Hun-2002,Eld-Meg-Ver-2004,Her-2004,Dec-Ter-2010}).
By using these results, we should be able to obtain a closed-form analytical expression
for problem~P1 in some cases.
For example, an analytical procedure for finding a minimum error measurement for any qubit state set
is shown in Ref.~\cite{Dec-Ter-2010}.
This method is applicable to problem~P1 with $\dim~\mH = 2$
since the corresponding modified problem can be reduced to
one of finding a minimum error measurement for a qubit state set.
The optimal value and solution of problem~P1 can be derived
from Theorems~\ref{thm:Legendre} and \ref{thm:opt_mod}
once we find those of the corresponding modified problem.

\section{Numerical algorithm for solving a generalized quantum state discrimination problem} \label{sec:numerical}

In this section, we present
a numerical algorithm of solving a generalized quantum state discrimination problem
by utilizing the modified one.
Problem~P1 can be formulated
as a semidefinite programming (SDP) problem;
thus, in general, an optimal solution can be computed in polynomial time
in $N = \dim~\mH$
by using well known algorithms such as interior point methods.
However, these methods require excessive computational resources
when $N$ is very large (e.g., \cite{Zib-Ela-2010}).
For example, the time complexity required by CSDP
(which is a widely used SDP solver implementing a primal-dual interior point method)
is $O(N^6)$.

Je\v{z}ek {\it et al.} proposed an iterative algorithm for obtaining
a minimum error measurement, which we call Je\v{z}ek {\it et al.}'s algorithm \cite{Jez-Reh-Fiu-2002}.
Later, Fiur\'a\v{s}ek {\it et al.} extended this algorithm to
an optimal inconclusive measurement \cite{Fiu-Jez-2003}.
These algorithms resemble projected gradient-based algorithms,
which consist of a gradient step (i.e.,
approaching the optimal value of the objective function)
and a projection step (i.e., projecting a solution onto the feasible set).
However, the projection onto the feasible set is
computationally expensive even for moderately complex constraints.
We propose a low computational complexity numerical algorithm for solving problem~P1
that works by solving modified problem~P2
and does not always project a solution onto the feasible set of problem~P1
at each iteration.

\subsection{Conventional method}

Let us explain Je\v{z}ek {\it et al.}'s algorithm \cite{Jez-Reh-Fiu-2002}.
Consider a quantum state set $\{ \hrho_r : r \in \mI_R \}$
with prior probabilities $\{ \xi_r : r \in \mI_R \}$.
This algorithm iteratively computes $\Pi^\l \in \POVM_R$ for $l = 1, 2, \cdots$
with an initial POVM $\Pi^{(0)}$.

\begin{figure}
\begin{algorithm}[H]
\caption{Je\v{z}ek {\it et al.}'s algorithm.}
\begin{algorithmic}[1]
 \REQUIRE prior probabilities $\{ \xi_r : r \in \mI_R \}$ \\ and quantum states $\{ \hrho_r : r \in \mI_R \}$
 \STATE $\hPi_r^{(0)} \leftarrow \ident/R$, $\forall r \in \mI_R$
 \FOR{$l = 0, 1, 2, \cdots$}
 \STATE $\hD_r^\l \leftarrow \xi_r^2 \hrho_r \hPi_r^\l \hrho_r$, $~\forall r \in \mI_R$
 \STATE $\hLambda^\l \leftarrow \left[ \sum_{k=0}^{R-1} \hD_k^\l \right]^{-\frac{1}{2}}$
 \STATE $\hPi_r^{(l+1)} \leftarrow \hLambda^\l \hD_r^\l \hLambda^\l$, $~\forall r \in \mI_R$
 \ENDFOR
 \ENSURE POVM $\Pi^{(l+1)}$
\end{algorithmic}
\end{algorithm}
\end{figure}

Algorithm~1 is the pseudocode of Je\v{z}ek {\it et al.}'s algorithm.
This algorithm resembles projected gradient-based algorithms;
we can interpret that Step~3 helps $\Pi^{(l+1)}$ to approach an optimal solution,
and Steps~4 and 5 are projection steps
in which $\Pi^{(l+1)}$ is computed as the projection of $\{ \hD_r^\l : r \in \mI_R \}$ onto $\POVM_R$.
Although $\sum_{k=0}^{R-1} \hD_k^\l \neq \ident$, i.e., $\{ \hD_r^\l \} \not\in \POVM_R$, generally holds,
it is guaranteed that $\Pi^{(l+1)} \in \POVM_R$.

Je\v{z}ek {\it et al.}'s algorithm is extended to an optimal inconclusive measurement
with the average inconclusive probability $p$ in Ref.~\cite{Fiu-Jez-2003}.
This algorithm iteratively computes $\Pi^\l$ for $l = 1, 2, \cdots$
such that $\Pi^\l$ is a feasible solution
(i.e., $\Pi^\l \in \POVM_{R+1}$ and $\sum_{r=0}^{R-1} \Tr[\xi_r \hrho_r \hPi_R^\l] = p$ hold).
However, the projection onto the feasible set requires large computational resources.

\subsection{Proposed method}

\subsubsection{Algorithm} \label{subsubsec:prop}

The proposed algorithm uses Theorem~\ref{thm:opt_mod}, which states that
an optimal solution to problem~P2 with respect to an appropriate $\lambda$
is an optimal solution to problem~P1, and Remark~\ref{remark:mod_me},
which states that problem~P2 can be reduced to
one of finding a minimum error measurement.
Our algorithm also computes upper and lower bounds for the optimal value at each iteration,
which are used as a stop criterion for the iterations.

\begin{figure}
\begin{algorithm}[H]
\caption{Proposed algorithm.}
\begin{algorithmic}[1]
 \REQUIRE $\{ \hc_m \in \mS_+ : m \in \mI_M \}$, $\{ \ha_{j,m} \in \mS_+ : j \in \mI_J, m \in \mI_M \}$,
   $\{ b_j \in \Real : j \in \mI_J \}$,
   and a constant for the stopping criterion $\epsilon > 0$
 \STATE $\hPi_m^{(0)} \leftarrow \ident/M$, $\forall m \in \mI_M$
 \STATE Initialize $\lambda^{(0)}$
 \FOR{$l = 0, 1, 2, \cdots$}
 \STATE /* Update $\Pi$ */
 \STATE $\hD_m^\l \leftarrow \hz_m[\lambda^\l] \hPi_m^\l \hz_m[\lambda^\l]$,
   $~\forall m \in \mI_M$
 \STATE $\hLambda^\l \leftarrow \left[ \sum_{m=0}^{M-1} \hD_m^\l \right]^{-\frac{1}{2}}$
 \STATE $\hPi_m^{(l+1)} \leftarrow \hLambda^\l \hD_m^\l \hLambda^\l$, $~\forall m \in \mI_M$
 \STATE /* Decide whether to stop */
 \STATE Compute $\fU$
 \STATE Compute $\fL$
 \IF{$\fU - \fL < \epsilon$}
 \STATE break
 \ENDIF
 \STATE /* Update $\lambda$ */
 \STATE $\lambda^{(l+1)} \leftarrow \phi[\lambda^\l; \Pi^{(l+1)}]$
 \ENDFOR
 \STATE Correct $\Pi^{(l+1)}$
 \ENSURE POVM $\Pi^{(l+1)}$
\end{algorithmic}
\end{algorithm}
\end{figure}

Algorithm~2 is the pseudocode of our algorithm.
$\epsilon > 0$ is a constant for the stopping criterion.
In Steps~4--7, $\Pi^{(l+1)}$ is computed using an iterative formula similar to that of
Je\v{z}ek {\it et al.}'s algorithm;
however, our algorithm uses $\hz_m[\lambda^\l]$ instead of $\xi_m \hrho_m$.
In Steps~8--13, the upper and lower bounds, $\fU$ and $\fL$, of $\fo$ are computed,
and whether to stop the iteration process is decided.
We will show how to compute $\fU$ and $\fL$ in Subsubsecs.~\ref{subsubsec:upper}
and \ref{subsubsec:lower}.
In Steps~14 and 15, $\lambda$ is updated in a way that will be described
in Subsubsec.~\ref{subsubsec:phi}.
$\phi$ in Step~15 is a function that updates $\lambda^\l$.
In Step~17, $\Pi^{(l+1)}$ is corrected to ensure that $\Pi^{(l+1)} \in \POVM_M^\b$,
which can be done by replacing $\Pi^{(l+1)}$ with a $\Pi$ that satisfies $f(\Pi) = \fL$
(such a $\Pi$ can be easily obtained, as shown in Subsubsec.~\ref{subsubsec:lower}).

We compute $\lambda^\l$ $~(l = 0, 1, 2, \cdots)$ such that
$\lambda_j^\l > 0$ for any $j \in \mI_J$.
In this case, the following operator
\begin{eqnarray}
 \hY^\l &=& \sum_{m=0}^{M-1} \hD_m^\l \label{eq:Yl}
\end{eqnarray}
is positive definite (see Appendix~\ref{append:Yl}).
This indicates that $\hLambda^\l = [\hY^\l]^{-1/2}$ exists.

In the proposed algorithm, $\Pi^\l$ $~(l = 0, 1, 2, \cdots)$ is not always
a feasible solution to problem~P1.
Instead, the time complexity required by our algorithm for a single iteration
is of the same order as that required by Je\v{z}ek {\it et al.}'s algorithm.
In addition, it is expected that if $\lambda^\l$ converges to an appropriate value,
then $\hPi^\l$ converges to a feasible and optimal solution.

\subsubsection{Upper bound for optimal value} \label{subsubsec:upper}

The following lemma gives an upper bound for the optimal value $\fo(b)$ of problem~P1.

\begin{lemma} \label{lemma:upper}
 Suppose $\lambda \in \Real_+^J$ and $\Pi \in \POVM_M$.
 Let
 \begin{eqnarray}
  \fU &=& s(\hY, \lambda) = \Tr~\hY - \lambda \cdot b, \label{eq:fU}
 \end{eqnarray}
 where
 \begin{eqnarray}
  \hY &=& \hY_0 + \sum_{m=0}^{M-1} (1 - t_m)^+ \hz_m(\lambda), \label{eq:upper_Y} \\
  \hY_0 &=& \left[ \sum_{m=0}^{M-1} \hz_m(\lambda) \hPi_m \hz_m(\lambda) \right]^{\frac{1}{2}},
   \label{eq:upper_Y0}
 \end{eqnarray}
 and $(x)^+$ is $x$ if $x > 0$; otherwise 0.
 $t_m$ is the maximum real number satisfying $\hY_0 \ge t_m \hz_m(\lambda)$.
 Then, for any $\lambda \in \Real_+^J$ and $\Pi \in \POVM_M$,
 $\fU \ge \fo(b)$ holds.
 Moreover, let $\hPi^\bullet$ be an optimal solution to problem~P2
 with respect to $\lambda \in \Real_+^J$.
 For any $b \in \Real^J$ satisfying Eq.~(\ref{eq:pro_opt_mod2}),
 $\fU$ obtained from Eq.~(\ref{eq:fU}) by replacing $\Pi$ with $\Pi^\bullet$ satisfies $\fU = \fo(b)$.
\end{lemma}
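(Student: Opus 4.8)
The plan is to treat the two assertions separately: the inequality $\fU \ge \fo(b)$ follows from weak duality once $\hY$ is shown to be dual-feasible, while the equality follows from the minimum-error optimality conditions for problem~P2. Since the optimal value of DP1 coincides with $\fo(b)$ whenever $\POVM_M^\b$ is non-empty (and $\fo(b)=-\infty$ otherwise, making the bound trivial), it suffices to exhibit $(\hY,\lambda)$ as a feasible point of DP1; as $\lambda\in\Real_+^J$ is given, the only thing to check is $\hY\in\mS_\lambda$, i.e. $\hY\ge0$ and $\hY\ge\hz_m(\lambda)$ for every $m\in\mI_M$. Positive semidefiniteness is immediate: $\hY_0$ is the square root of $\sum_m\hz_m(\lambda)\hPi_m\hz_m(\lambda)$, which is positive semidefinite because each summand is (as $\hPi_m\ge0$), and every correction term $(1-t_m)^+\hz_m(\lambda)$ is positive semidefinite. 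For $\hY\ge\hz_m(\lambda)$ I would fix $m$ and split on the value of $t_m$. If $t_m\ge1$, then $\hY\ge\hY_0\ge t_m\hz_m(\lambda)\ge\hz_m(\lambda)$, where the first step drops the nonnegative extra terms, the second is the definition of $t_m$, and the third uses $t_m\ge1$. If $t_m<1$, then keeping only the $m$-th term of the sum gives $\hY\ge\hY_0+(1-t_m)\hz_m(\lambda)\ge t_m\hz_m(\lambda)+(1-t_m)\hz_m(\lambda)=\hz_m(\lambda)$. Either way $\hY\in\mS_\lambda$, and weak duality yields $\fU=s(\hY,\lambda)\ge\fo(b)$.

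For the equality I would first re-express $\fo(b)$ through $\go(\lambda)$. By Theorem~\ref{thm:opt_mod}(2) the hypotheses on $b$ guarantee that $\hPi^\bullet$ is optimal for P1, so $f(\hPi^\bullet)=\fo(b)$, while its optimality for P2 gives $g(\hPi^\bullet;\lambda)=\go(\lambda)$. The complementary-slackness condition $\lambda_j[\beta_j(\hPi^\bullet)-b_j]=0$ in Eq.~(\ref{eq:pro_opt_mod2}) gives $\lambda\cdot b=\sum_j\lambda_j\beta_j(\hPi^\bullet)$, and combined with $g(\hPi^\bullet;\lambda)=f(\hPi^\bullet)+\sum_j\lambda_j\beta_j(\hPi^\bullet)$ this produces $\fo(b)=\go(\lambda)-\lambda\cdot b$. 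Since $\fU=\Tr\hY-\lambda\cdot b$, it therefore remains to prove the single identity $\Tr\hY=\go(\lambda)$ when $\Pi=\hPi^\bullet$.

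The crux is to identify $\hY_0$ with the optimal dual operator of DP2. By the optimality conditions for a minimum error measurement (which by Remark~\ref{remark:mod_me} characterize $\hPi^\bullet$), the operator $\hX^\bullet:=\sum_m\hz_m(\lambda)\hPi_m^\bullet$ is Hermitian, satisfies $\hX^\bullet\ge\hz_m(\lambda)\ge0$ and $\hX^\bullet\hPi_m^\bullet=\hz_m(\lambda)\hPi_m^\bullet$, and has $\Tr\hX^\bullet=\go(\lambda)$. Multiplying the slackness identity on the right by $\hz_m(\lambda)$, summing over $m$, and using $\sum_m\hPi_m^\bullet\hz_m(\lambda)=(\hX^\bullet)^\dagger=\hX^\bullet$, I obtain $(\hX^\bullet)^2=\sum_m\hz_m(\lambda)\hPi_m^\bullet\hz_m(\lambda)=\hY_0^2$; since $\hX^\bullet\ge0$, taking the positive square root gives $\hY_0=\hX^\bullet$. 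Finally $\hX^\bullet\ge\hz_m(\lambda)$ forces $t_m\ge1$ for every $m$, so every correction term vanishes and $\hY=\hY_0=\hX^\bullet$. Hence $\Tr\hY=\Tr\hX^\bullet=\go(\lambda)$, and therefore $\fU=\go(\lambda)-\lambda\cdot b=\fo(b)$.

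I expect the main obstacle to be the equality part, specifically the algebraic identification $\hY_0=\hX^\bullet$. It relies on invoking the correct form of the minimum-error optimality conditions — in particular that $\sum_m\hz_m(\lambda)\hPi_m^\bullet$ is Hermitian and satisfies complementary slackness — and then on noticing that these same conditions force every $t_m\ge1$, so that the $(1-t_m)^+$ corrections disappear exactly at optimality. By contrast, the case analysis establishing dual feasibility in the inequality part is routine.
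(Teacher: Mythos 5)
Your proposal is correct and follows essentially the same route as the paper: dual feasibility of $\hY$ (via $t_m + (1-t_m)^+ \ge 1$) plus weak duality for the inequality, and the identification $\hY_0 = \hX^\bullet$ together with $t_m \ge 1$ for the equality. The only cosmetic difference is that you import the Hermiticity, operator inequality, and complementary-slackness properties of $\hX^\bullet = \sum_m \hz_m(\lambda)\hPi_m^\bullet$ from the classical minimum-error optimality conditions (via Remark~\ref{remark:mod_me}), whereas the paper derives exactly the same facts internally from strong duality between P2 and DP2, and you reorganize the final bookkeeping through $\fo(b) = \go(\lambda) - \lambda \cdot b$ rather than computing $s(\hX^\bullet,\lambda) - \fo(b) = 0$ directly.
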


\begin{proof}
 First, we show $\fU \ge \fo(b)$.
 From Eq.~(\ref{eq:upper_Y}) and $\hY_0 \ge t_m \hz_m(\lambda)$,
 we have that for any $m \in \mI_M$,
 \begin{eqnarray}
  \hY &\ge& \hY_0 + (1 - t_m)^+ \hz_m(\lambda) \nonumber \\
  &\ge& [t_m + (1 - t_m)^+] \hz_m(\lambda) \ge \hz_m(\lambda),
 \end{eqnarray}
 which means $\hY \in \mS_\lambda$.
 In contrast, since $\fo(b)$ is equal to the optimal value of dual problem~D1,
 $s(\hX, \lambda) \ge \fo(b)$ holds for any $\hX \in \mS_\lambda$.
 Thus, $\fU = s(\hY, \lambda) \ge \fo(b)$.

 Next, we show that 
 $\fU$ obtained from Eq.~(\ref{eq:fU}) by replacing $\Pi$ with $\Pi^\bullet$ satisfies $\fU = \fo(b)$
 for any $b \in \Real^J$ satisfying Eq.~(\ref{eq:pro_opt_mod2}).
 Let $\hX^\bullet$ be an optimal solution to problem~DP2.
 We have
 \begin{eqnarray}
  \sum_{m=0}^{M-1} \Tr[[\hX^\bullet - \hz_m(\lambda)] \hPi_m^\bullet]
   &=& \Tr~\hX^\bullet - \sum_{m=0}^{M-1} \Tr[\hz_m(\lambda) \hPi_m^\bullet] \nonumber \\
  &=& 0, \label{eq:X_z_Pi}
 \end{eqnarray}
 where the last equality follows from the optimal values of problems~P2 and DP2 being same.
 From Eq.~(\ref{eq:X_z_Pi}) and $\hX^\bullet \ge \hz_m(\lambda)$,
 $\Tr[[\hX^\bullet - \hz_m(\lambda)] \hPi_m^\bullet] = 0$ holds for any $m \in \mI_M$,
 which gives
 \begin{eqnarray}
  [\hX^\bullet - \hz_m(\lambda)] \hPi_m^\bullet &=& 0, ~ \forall m \in \mI_M.
 \end{eqnarray}
 Thus, we obtain
 \begin{eqnarray}
  \hX^\bullet \hPi_m^\bullet \hX^\bullet &=& \hz_m(\lambda) \hPi_m^\bullet \hz_m(\lambda).
 \end{eqnarray}
 Summing this equation over $m = 0, \cdots, M-1$ and using Eq.~(\ref{eq:upper_Y0})
 yield $(\hX^\bullet)^2 = \hY_0^2$, i.e., $\hX^\bullet = \hY_0$.
 Since $\hX^\bullet \ge \hz_m(\lambda)$ for any $m \in \mI_M$,
 $t_m \ge 1$, which gives $\hY = \hX^\bullet$.
 Thus, $\fU = s(\hX^\bullet, \lambda)$.
 In contrast, $s(\hX^\bullet, \lambda) = \fo(b)$.
 Indeed,
 \begin{eqnarray}
  \lefteqn{ s(\hX^\bullet, \lambda) - \fo(b) } \nonumber \\
  &=& \Tr~\hX^\bullet - \lambda \cdot b - \sum_{m=0}^{M-1} \Tr(\hc_m \hPi_m^\bullet) \nonumber \\
  &=& \Tr~\hX^\bullet - \lambda \cdot b - \sum_{m=0}^{M-1} \Tr[\hz_m(\lambda) \hPi_m^\bullet]
   + \sum_{j=0}^{J-1} \lambda_j \beta_j(\Pi^\bullet) \nonumber \\
  &=& \Tr~\hX^\bullet - \sum_{m=0}^{M-1} \Tr[\hz_m(\lambda) \hPi_m^\bullet] \nonumber \\
  &=& 0,
 \end{eqnarray}
 where the second line follows from $\Pi^\bullet$ being an optimal solution
 to problem~P1 with respect to $b$, which follows from Theorem~\ref{thm:opt_mod}.
 The third, fourth, and last lines follow from Eqs.~(\ref{eq:zm}), (\ref{eq:pro_opt_mod2}),
 and (\ref{eq:X_z_Pi}), respectively.
 Therefore, $\fU = \fo(b)$ holds.
 \QED
\end{proof}

In the proposed algorithm,
$\fU$ can be computed from Eq.~(\ref{eq:fU}) by replacing $\Pi$ with $\Pi^\l$ and $\lambda$ with $\lambda^\l$.
In this case, since $\hY_0 = [\hY^\l]^{1/2}$ holds from Eq.~(\ref{eq:Yl}),
$\hY_0$ is positive definite.
Thus, $\hY_0 \ge t_m \hz_m[\lambda^\l]$ is equivalent to
$\ident \ge t_m \hY_0^{-1/2} \hz_m[\lambda^\l] \hY_0^{-1/2}$,
which implies that $t_m$ is
the inverse of the largest eigenvalue of $\hY_0^{-1/2} \hz_m[\lambda^\l] \hY_0^{-1/2}$.
Let $\hq_m^\l$ be an operator satisfying $\hz_m[\lambda^\l] = \hq_m^\l [\hq_m^\l]^\dagger$
($\hA^\dagger$ denotes the conjugate transpose of $\hA$); then,
$t_m$ is also the inverse of the largest eigenvalue of $[\hq_m^\l]^\dagger \hY_0^{-1} \hq_m^\l$,
which follows from $\hA \hA^\dagger$ and $\hA^\dagger \hA$ having the same nonzero eigenvalues
for any operator $\hA$ (in this case, $\hA = \hY_0^{-1/2} \hq_m^\l$) \cite{Hor-Joh-1985}.
We can compute the largest eigenvalue of $[\hq_m^\l]^\dagger \hY_0^{-1} \hq_m^\l$
more easily than that of $\hY_0^{-1/2} \hz_m[\lambda^\l] \hY_0^{-1/2}$
in the case in which $\rank~\hz_m[\lambda^\l]$ is small,
since $[\hq_m^\l]^\dagger \hY_0^{-1} \hq_m^\l$ can be represented as a $(\rank~\hz_m[\lambda^\l])$-dimensional
matrix.

\subsubsection{Lower bounds for the optimal value} \label{subsubsec:lower}

Consider a lower bound, $\fL$, for the optimal value $\fo(b)$.
From problem~P1,
$f(\Pi) \le \fo(b)$ holds for any $\Pi \in \POVM_M^\b$.
Thus, we can consider the following lower bound:
\begin{eqnarray}
 \fL &=& \max \{ f[\Pi^\l] : \Pi^\l \in \POVM_M^\b, ~ l = 0, 1, \cdots \}, \label{eq:fL}
\end{eqnarray}
if $l$ exists such that $\Pi^\l \in \POVM_M^\b$;
otherwise, $\fL = - \epsilon$.
In Step~17 of Algorithm~2, we can correct $\Pi^{(l+1)}$ by simply replacing it with
$\Pi^{(l^\opt)}$, where $l^\opt$ satisfies $f[\Pi^{(l^\opt)}] = \fL$.
The stopping criterion of the proposed algorithm, i.e., $\fU - \fL < \epsilon$,
does not hold whenever $\fL = - \epsilon$ holds.

Assume that $\lambda^\l$ and $\Pi^\l$ respectively converge to
$\lambda^\circ$ and $\Pi^\circ$.
Also, assume that $\Pi^\circ$ is an optimal solution to problem~P2
with respect to $\lambda^\circ$
and that Eq.~(\ref{eq:pro_opt_mod2}) holds after replacing $\lambda$ with $\lambda^\circ$
and $\Pi^\bullet$ with $\Pi^\circ$.
Then, from Statement~(2) of Theorem~\ref{thm:opt_mod}, $\Pi^\circ$ is also an optimal solution to
problem~P1.
In this case, from Lemma~\ref{lemma:upper} and Eq.~(\ref{eq:fL}),
both $\fU$ and $\fL$ converge to the optimal value of problem~P1.
%We will discuss the update strategy for $\lambda^\l$ in Subsubsec.~\ref{subsubsec:phi}.

In the rest of this subsubsection, we only consider the case of $J = 1$,
in which case we can easily obtain a lower bound tighter than Eq.~(\ref{eq:fL}).
Let $\mQ$ be the set defined by
\begin{eqnarray}
 \mQ &=& \{ (\beta_0(\Pi), f(\Pi)) : \Pi \in \POVM_M \}. \label{eq:Q}
\end{eqnarray}
Since $\beta_0(\Pi)$ and $f(\Pi)$ are linear functions of $\Pi$,
we can easily see that $\mQ$ is convex.
The lower bound $\fL$ can be obtained by using the fact that
$\fo(b_0)$ equals the maximum value of $u$ satisfying $(b_0, u) \in \mQ$
(note that $b = b_0 \in \Real$ holds when $J = 1$).
Now, assume that two points $(q_\S, f_\S), (q_\L, f_\L) \in \mQ$
satisfying $q_\S < b_0 \le q_\L$ are given.
We can further assume, without loss of generality, that $f_\S \ge f_\L$;
otherwise, we can replace $f_\S$ with $f_\L$ since $(q_\S, f_\L) \in \mQ$ holds.
Indeed, if $f_\S < f_\L$, then since $\fo(q_\S) \ge \fo(q_\L) \ge f_\L$ holds,
which follows from $\fo(b_0)$ is monotonically decreasing with respect to $b_0$,
we have $(q_\S, f_\L) \in \mQ$.
Let $(b_0, u) \in \mQ$ be the point on the line connecting the two points
$(q_\S, f_\S)$ and $(q_\L, f_\L)$; then, we set $\fL$ as $u$.
Such an $\fL$ can be expressed as
\begin{eqnarray}
 \fL &=& \frac{(q_\L - b_0) f_\S + (b_0 - q_\S) f_\L}{q_\L - q_\S}. \label{eq:fL1}
\end{eqnarray}
We want to compute and update two points $(q_\S, f_\S), (q_\L, f_\L) \in \mQ$
so that $\fL$ defined by Eq.~(\ref{eq:fL1}) becomes as large as possible.

\begin{figure}
\begin{algorithm}[H]
\caption{Algorithm of computing a lower bound $\fL$ when $J = 1$.}
\begin{algorithmic}[1]
 \STATE $(q', f') \leftarrow (\beta_0[\Pi^{(l+1)}], f[\Pi^{(l+1)}])$
 \IF{$q' < b_0$}
 \STATE $f_{\rm tmp} \leftarrow \gamma f' + (1 - \gamma) f_\L$ \\
  ~ where $\gamma = (q_\L - b_0) / (q_\L - q')$
 \IF{$\fL < f_{\rm tmp}$}
 \STATE $\fL \leftarrow f_{\rm tmp}$
 \STATE $(q_\S, f_\S) \leftarrow (q', f')$
 \STATE $\Pi^\S \leftarrow \Pi^{(l+1)}$
 \ENDIF
 \ELSE
 \STATE $f_{\rm tmp} \leftarrow \gamma f_\S + (1 - \gamma) f'$ \\
  ~ where $\gamma = (q' - b_0) / (q' - q_\S)$
 \IF{$\fL < f_{\rm tmp}$}
 \STATE $\fL \leftarrow f_{\rm tmp}$
 \STATE $(q_\L, f_\L) \leftarrow (q', f')$
 \STATE $\Pi^\L \leftarrow \Pi^{(l+1)}$
 \IF{$f_\S < f_\L$}
 \STATE $f_\S \leftarrow f_\L$
 \STATE $\Pi^\S \leftarrow \Pi^\L$
 \ENDIF
 \ENDIF
 \ENDIF
\end{algorithmic}
\end{algorithm}
\end{figure}

An example of pseudocode for computing $\fL$ by using Eq.~(\ref{eq:fL1})
is shown in Algorithm~3.
Note that the pseudocode corresponds to Step~10 of Algorithm~2.
Before the first iteration of Algorithm~2,
we initialize $(q_\S, f_\S) = (\beta_0[\Pi^{(0)}], f[\Pi^{(0)}])$ and
$(q_\L, f_\L) = (b_0, - \epsilon)$ if $\beta_0[\Pi^{(0)}] < b_0$; otherwise,
$(q_\S, f_\S) = (0, f[\Pi^{(0)}])$ and $(q_\L, f_\L) = (\beta_0[\Pi^{(0)}], f[\Pi^{(0)}])$.
$\fL$ is initialized with Eq.~(\ref{eq:fL1});
then, $\fL = f[\Pi^{(0)}]$ holds if $\beta_0[\Pi^{(0)}] \ge b_0$; otherwise, $\fL = - \epsilon$ holds.
Also, we initialize $\Pi^\S = \Pi^\L = \Pi^{(0)}$.
$\Pi^\L$ is a POVM that always satisfies $f(\Pi^\L) = f_\L$ and $\beta_0(\Pi^\L) = q_\L$
whenever $\beta_0(\Pi^\L) \ge b_0$,
and $\Pi^\S$ is a POVM that always guarantees $f(\Pi^\S) = f_\S$ and $\beta_0(\Pi^\S) \ge q_\S$
(note that if $f_\S < f_\L$ holds in Step~15, then $\beta_0(\Pi^\S) = q_\L > q_\S$ holds at Step~17).
Step~1 substitutes $\beta_0[\Pi^{(l+1)}]$ and $f[\Pi^{(l+1)}]$ for $q'$ and $f'$, respectively.
Steps~3--8 correspond to the case of $q' = \beta_0[\Pi^{(l+1)}] < b_0$.
$f_{\rm tmp}$ computed in Step~3 is identical to
$\fL$ obtained from Eq.~(\ref{eq:fL1}) after substituting $(q', f')$ for $(q_\S, f_\S)$.
If $f_{\rm tmp}$ is larger than $\fL$, then
we substitute $(q', f')$ for $(q_\S, f_\S)$.
Steps~10--18 correspond to the case of $q' = \beta_0[\Pi^{(l+1)}] \ge b_0$.
In a similar way to Steps~3--8, we substitute $(q', f')$ for $(q_\L, f_\L)$, if necessary.
Also, in Step~16, we substitute $f_\L$ for $f_\S$ if $f_\S < f_\L$
in order to guarantee $f_\S \ge f_\L$.
$\fL = - \epsilon$ holds unless there exists $l'$ with $0 \le l' \le l + 1$ such that
$\Pi^{(l')} \in \POVM_M^\b$, in which case
the stopping criterion of the proposed algorithm, i.e., $\fU - \fL < \epsilon$, does not hold.

In Step~17 of Algorithm~2, we can correct $\Pi^{(l+1)}$ by replacing it with
$\Phi = \{ \hPhi_m : m \in \mI_M \}$, where
\begin{eqnarray}
 \hPhi_m &=& \frac{(q_\L - b_0) \hPi_m^\S + (b_0 - q_\S) \hPi_m^\L}{q_\L - q_\S}. \label{eq:fL_Phi}
\end{eqnarray}
$\Phi \in \POVM_M^\b$ holds since
\begin{eqnarray}
 \hspace{-.5em}
 \beta_0(\Phi) &=& \frac{(q_\L - b_0) \beta_0(\Pi^\S) + (b_0 - q_\S) \beta_0(\Pi^\L)}{q_\L - q_\S} \ge b_0,
\end{eqnarray}
which follows from $\beta_0(\Pi^\S) \ge q_\S$ and $\beta_0(\Pi^\L) = q_\L$,
and $\Phi \in \POVM_M$ obviously holds from Eq.~(\ref{eq:fL_Phi}).
Moreover, from $f(\Pi^\S) = f_\S$, $f(\Pi^\L) = f_\L$, and Eq.~(\ref{eq:fL1}), we have
\begin{eqnarray}
 f(\Phi) &=& \frac{(q_\L - b_0) f(\Pi^\S) + (b_0 - q_\S) f(\Pi^\L)}{q_\L - q_\S} = \fL.
\end{eqnarray}

\subsubsection{Computing $\lambda^\l$} \label{subsubsec:phi}

Now let us compute $\lambda^\l$.
A numerical observation has been reported that Je\v{z}ek {\it et al.}'s algorithm
converges to a minimum error measurement \cite{Jez-Reh-Fiu-2002}.
Thus, $\Pi^\l$ should converge to an optimal solution to
problem~P2 if $\lambda^\l$ converges to an appropriate value at a very slow rate.
Note that it has been proved that Je\v{z}ek {\it et al.}'s algorithm monotonically increases
the average correct probability \cite{Rei-Wer-2005,Rei-2007,Tys-2010}
and that it converges to a minimum error measurement
in the case of a linearly independent pure state set \cite{Nak-Kat-Usu-2015-numerical}.
Whether Je\v{z}ek {\it et al.}'s algorithm converges to a minimum error measurement
for any quantum state set remains an open question.

To simplify the discussion, we will assume that
$\Pi^{(l+1)}$ is approximately equivalent to an optimal solution to problem~P2 with respect to $\lambda^\l$.
Suppose that $\lambda^\l$ and $\Pi^\l$ converge to $\lambda^\circ$ and $\Pi^\circ$, respectively.
Moreover, let us assume that $\Pi^\circ$ is an optimal solution to problem~P2 with respect to $\lambda^\circ$.
From Statement~(2) of Theorem~\ref{thm:opt_mod},
$\Pi^\circ$ is an optimal solution to problem~P1 with respect to $b$
if, for any $k \in \mI_J$,
\begin{eqnarray}
 \beta_k(\Pi^\circ) &\ge& b_k, \nonumber \\
 \lambda_k^\circ [\beta_k(\Pi^\circ) - b_k] &=& 0. \label{eq:lambda_update_s}
\end{eqnarray}
In what follows, we consider an update formula of $\lambda^\l$ such that this equation holds.

For simplicity,
we first consider a fixed $k \in \mI_J$
and try to compute $\lambda^{(l+1)}$ such that only
$\lambda_k^{(l+1)}$ is updated
(i.e., $\lambda_j^{(l+1)} = \lambda_j^\l$ holds for any $j \in \mI_J$ with $j \neq k$).
We will update $\lambda_k^{(l+1)}$ such that $\lambda_k^{(l+1)} > \lambda_k^\l$
in the case of $\beta_k[\Pi^{(l+1)}] < b_k$
and update $\lambda_k^{(l+1)}$ such that $\lambda_k^{(l+1)} < \lambda_k^\l$
in the case of $\beta_k[\Pi^{(l+1)}] > b_k$.
From our assumptions, $\Pi^{(l+1)}$ and $\Pi^{(l+2)}$ are respectively optimal solutions to
problem~P2 with respect to $\lambda^\l$ and $\lambda^{(l+1)}$;
thus, from Lemma~\ref{lemma:lambda_vs_beta} in Appendix~\ref{append:lambda_vs_beta},
$\beta_k[\Pi^{(l+2)}] \ge \beta_k[\Pi^{(l+1)}]$ holds if $\beta_k[\Pi^{(l+1)}] < b_k$,
and $\beta_k[\Pi^{(l+2)}] \le \beta_k[\Pi^{(l+1)}]$ holds if $\beta_k[\Pi^{(l+1)}] > b_k$.
Therefore, we would have
\begin{eqnarray}
 |\beta_k[\Pi^{(l+2)}] - b_k| \le |\beta_k[\Pi^{(l+1)}] - b_k|
\end{eqnarray}
at least if $\lambda_k^{(l+1)}$ is sufficiently close to $\lambda_k^\l$.
We choose an updating formula such that $\lambda_k^\l$ goes to infinity
whenever $\beta_k[\Pi^\l] < b_k$ always holds for any sufficiently large $l$,
and $\lambda_k^\l$ converges to 0
whenever $\beta_k[\Pi^\l] > b_k$ always holds for any sufficiently large $l$.
Then, it is expected that the following three cases may occur:
(a) $\beta_k(\Pi^\circ) = b_k$, (b) $\beta_k(\Pi^\circ) > b_k$ and $\lambda_k^\circ = 0$,
and (c) $\beta_k(\Pi^\circ) < b_k$ and $\lambda_k^\circ = \infty$.
In the cases of (a) and (b), Eq.~(\ref{eq:lambda_update_s}) holds at least for the particular $k$.
In contrast, in the case of (c), the feasible set of the original problem~P1, $\POVM_M^\b$, is empty.
Indeed, in this case, from Eq.~(\ref{eq:main_mod}) and $\lambda_k^\l \to \infty$,
$g(\Pi; \lambda^\l) / \lambda_k^\l \to \beta_k(\Pi)$ for large $l$;
thus, $\beta_k(\Pi) \le \beta_k(\Pi^\circ)$ holds for any $\Pi \in \POVM_M$
since $\Pi^\circ$ is an optimal solution to problem~P2.
This means that $\beta_k(\Pi) < b_k$ holds for any $\Pi \in \POVM_M$,
i.e., $\POVM_M^\b$ is empty.
As an example of an updating formula, we can consider computing $\lambda_k^{(l+1)}$ as follows:
\begin{eqnarray}
 \lambda_k^{(l+1)} &=& \lambda_k^\l
  \exp\left[ \frac{\kappa_k (b_k - \beta_k[\Pi^{(l+1)}])}{b_k} \right]. \label{eq:lambda_update}
\end{eqnarray}
This equation computes $\lambda_k^{(l+1)}$ based on the ratio of $b_k - \beta_k[\Pi^{(l+1)}]$ to $b_k$.
The parameter $\kappa_k$ $~(k \in \mI_J)$ is a positive real number, which affects the convergence
speed of $\lambda_k^\l$.
Note that we can assume $b_k > 0$ without loss of generality.
Indeed, if $b_k \le 0$ holds, then the constraint of $\beta_k(\Pi) \ge b_k$ can be ignored
since $\beta_k(\Pi) \ge 0$ holds for any $\Pi \in \POVM_M$.
From the above discussion, if $J = 1$ (and $k = 0$), then
we can expect that $\lambda_0^\circ$ satisfies Eq.~(\ref{eq:lambda_update_s})
by using an update formula of Eq.~(\ref{eq:lambda_update})
with a sufficiently small $\kappa_0$
whenever $\POVM_M^\b$ is not empty.

In practice, in the case of $J > 1$, we need to compute, not a particular $\lambda_k^\l$,
but all of the $\{ \lambda_j^\l : j \in \mI_J \}$.
One way is to update $\lambda_k^\l$ sequentially by using Eq.~(\ref{eq:lambda_update});
for example, first, we only update $\lambda_0^\l$ until $\lambda_0^\l$ converges sufficiently;
next, we only update $\lambda_1^\l$, and so on.
Another approach is to update $\lambda_k^\l$ for all $k \in \mI_J$ at the same time
by using Eq.~(\ref{eq:lambda_update}).
In both approaches, it is not guaranteed that, for any $k \in \mI_J$,
$\lambda_k^\circ$ satisfies Eq.~(\ref{eq:lambda_update_s})
when $\POVM_M^\b$ is not empty,
even if each $\kappa_k$ is sufficiently small.
However, our examination of many numerical examples showed that
when we take the latter approach,
the converged value of $\lambda^\l$, $\lambda^\circ$,
generally satisfies Eq.~(\ref{eq:lambda_update_s}), and thus, in this case,
$\Pi^\l$ converges to an optimal solution to problem~P1.

\subsection{Time and space complexity}

Since the proposed iterative algorithm is based on Je\v{z}ek {\it et al.}'s algorithm,
we can follow the discussion of the time and space complexity in Sec.~VII of Ref.~\cite{Nak-Kat-Usu-2015-numerical}.
We assume that $N^2$ is much larger than $M$, where $N = \dim~\mH$,
which is true in many practical cases.
In this case, CSDP, which is a widely-used SDP solver,
requires $O(N^6)$ time for a single iteration and $O(N^4)$ storage
to solve problem~P1 \cite{Bor-1999}.
In contrast, in our algorithm the calculation of $\hLambda^\l$
is the most time consuming part
and requires $O(N^3)$ time for a single iteration.
Our algorithm also requires $O(N^2)$ storage for $N$-dimensional square matrices.
We can say that our algorithm has lower computational complexity than CSDP
unless the number of iterations required by our algorithm
is $O(N^3)$ or more times that required by CSDP.
In the next subsection, we investigate the number of iterations required by our algorithm
to achieve sufficient accuracy.
Note that we can apply the algorithm proposed in Subsec.~IV A of Ref.~\cite{Nak-Kat-Usu-2015-numerical}
to make the computational time lower than that of the algorithm based on Je\v{z}ek {\it et al.}.

\subsection{Numerical experiments}

We performed numerical experiments to evaluate the convergence properties of our algorithm.
We considered the following two optimization problems:
\begin{eqnarray}
 \begin{array}{ll}
  {\rm maximize} & \displaystyle \PC(\Pi) = \sum_{r=0}^{R-1} \xi_r \Tr(\hrho_r \hPi_r) \\
  {\rm subject~to} & \Pi \in \POVM_R, ~ \Tr(\hrho_0 \hPi_0) \ge b_0 \\
 \end{array} \label{eq:main_Neyman}
\end{eqnarray}
and
\begin{eqnarray}
 \begin{array}{ll}
  {\rm maximize} & \displaystyle \PC(\Pi) = \sum_{r=0}^{R-1} \xi_r \Tr(\hrho_r \hPi_r) \\
  {\rm subject~to} & \Pi \in \POVM_R, ~ \Tr(\hrho_j \hPi_j) \ge b_j ~(\forall j \in \mI_R). \\
 \end{array} \label{eq:main_Neyman_N}
\end{eqnarray}
Problems (\ref{eq:main_Neyman}) and (\ref{eq:main_Neyman_N}) can be formulated as
problem~P1 with $J = 1$ and $J = R$, respectively.
The aim of problem~(\ref{eq:main_Neyman}) is to find a POVM $\Pi$
that maximizes the average correct probability $\PC(\Pi)$
under the constraint that the correct probability given the state $\hrho_0$,
$\Tr(\hrho_0 \hPi_0)$, is not less than a given value $b_0$.
In contrast, the aim of problem~(\ref{eq:main_Neyman_N}) is to find a POVM $\Pi$
that maximizes $\PC(\Pi)$
under the constraint that the correct probabilities given the state $\hrho_j$
is not less than a given value $b_j$ for any $j \in \mI_R$.

We examined the convergence properties of the proposed algorithm.
The rank of the density operator was set as $\rank~\hrho_r = T$ for each $r \in \mI_R$.
One hundred sets of randomly generated four quantum states, i.e., $R = 4$,
whose supports were linearly independent,
with randomly selected prior probabilities were used.
In this case, $N = \dim~\mH = 4T$.
We set $b_0 = 0.8 \PCopt$ and $b_j = 0.5 \PCopt$ $~(j \in \mI_R)$
for problems (\ref{eq:main_Neyman}) and (\ref{eq:main_Neyman_N}), respectively,
where $\PCopt$ is the average correct probability of a minimum error measurement.
We verified in advance that the feasible set of problem~P1 was not empty for each quantum state set
that we used.
We updated $\lambda_k^\l$ for all $k \in \mI_J$ at the same time
by using Eq.~(\ref{eq:lambda_update}).
Figures~\ref{fig:result_Neyman} and \ref{fig:result_Neyman_N}
show the average number of iterations to make the difference between the upper and
lower bounds for the optimal value, $\fU - \fL$, less than $10^{-9}$ for different $T \ge 1$
in the cases of problems (\ref{eq:main_Neyman}) and (\ref{eq:main_Neyman_N}).
In Fig.~\ref{fig:result_Neyman},
the average number of iterations required tends to slightly increase as $T$ increases.
However, we can see in both figures that it is not significantly changed for $T \le 15$.
Thus, the total computational complexity of the proposed algorithm is roughly
$O(N^3) = O(T^3)$,
which is about $O(T^3)$ times lower than that of CSDP.

\begin{figure}[tb]
 \centering
 \includegraphics[scale=0.8]{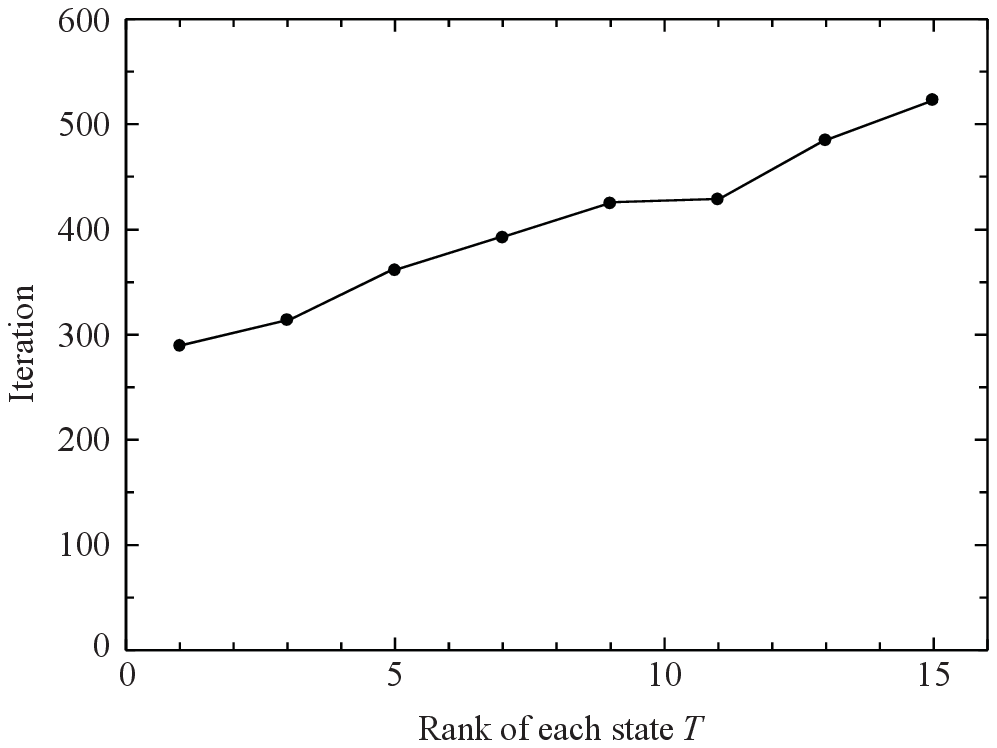}
 \caption{Example of numerical calculations for problem~(\ref{eq:main_Neyman}).
 The plots are the number of iterations to ensure that the difference between
 the upper and lower bounds for the optimal value is less than $10^{-9}$.}
 \label{fig:result_Neyman}
\end{figure}

\begin{figure}[tb]
 \centering
 \includegraphics[scale=0.8]{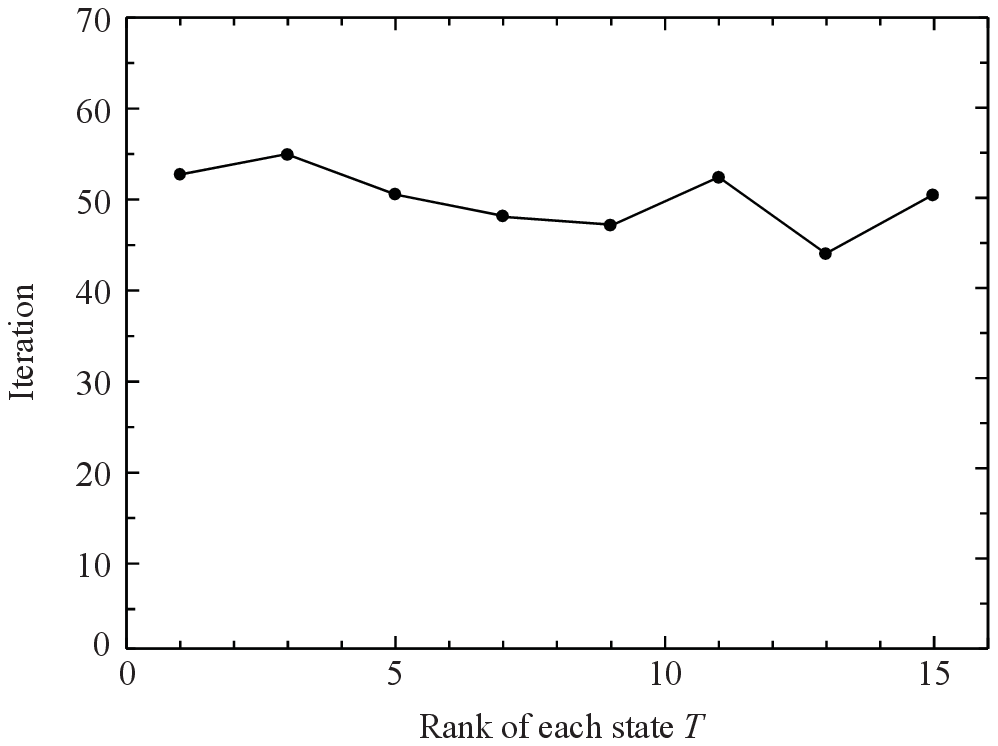}
 \caption{Example of numerical calculations for problem~(\ref{eq:main_Neyman_N}).
 The plots are the number of iterations to ensure that the difference between
 the upper and lower bounds for the optimal value is less than $10^{-9}$.}
 \label{fig:result_Neyman_N}
\end{figure}

\section{Conclusion}

We proposed an approach for finding an optimal quantum measurement
for a generalized quantum state discrimination problem
by using the modified version of the original problem.
The modified problem is relatively easy to solve since it can be reduced to
one of finding a minimum error measurement for a certain state set.
We showed that the optimal values of the original problem
can be derived from the Legendre transformation of
the optimal values of the modified problem
and that an optimal solution to the original problem can be obtained
if an optimal solution to the corresponding modified problem can be computed.
As an application of our approach,
we presented an algorithm for numerically obtaining optimal solutions
to generalized quantum state discrimination problems.

\begin{acknowledgments}
 We are grateful to O. Hirota of Tamagawa University for support.
 T. S. U. was supported (in part) by JSPS KAKENHI (Grant No.24360151).
\end{acknowledgments}

\appendix

\section{Proof that $\hY^\l$ in Eq.~(\ref{eq:Yl}) is positive definite} \label{append:Yl}

We can prove this by induction as follows.
First, let us consider the case of $l = 0$.
Let $\mZ_m = \supp~\hz_m[\lambda^\l]$.
Note that $\mZ_m$ is constant for any $\lambda^\l$ with $\lambda_j^\l > 0$ $~(\forall j \in \mI_J)$.
Since $\hD_m^{(0)} = (\hz_m[\lambda^{(0)}])^2/M$, i.e., $\supp~\hD_m^{(0)} = \mZ_m$, holds, we obtain
\begin{eqnarray}
 \supp~\hY^{(0)} &=& \bigcup_{m=0}^{M-1} \mZ_m = \mH, \label{eq:supp_Y0}
\end{eqnarray}
which indicates that $\hY^{(0)}$ is positive definite.
Next, assume that, for a certain $l > 0$, $\supp~\hD_m^{(l-1)} = \mZ_m$ holds and
$\hY^{(l-1)}$ is positive definite.
From Algorithm~2, $\hD_m^\l$ can be expressed as
\begin{eqnarray}
 \hD_m^\l &=& \hz_m[\lambda^\l] \hLambda^{(l-1)} \hD_m^{(l-1)}
  \hLambda^{(l-1)} \hz_m[\lambda^\l].
\end{eqnarray}
Since $\hLambda^{(l-1)} = [\hY^{(l-1)}]^{-1/2}$ is positive definite and
$\supp~\hD_m^{(l-1)} = \mZ_m$ holds,
from Lemma~\ref{lemma:supp2} in Appendix~\ref{append:supp}, $\supp~\hD_m^\l = \mZ_m$.
Therefore, similar to Eq.~(\ref{eq:supp_Y0}), $\hY^\l$ is positive definite.
\QED

\section{Properties of support spaces} \label{append:supp}

The aim of this section is to prove Lemma~\ref{lemma:supp2}.
As preparation, we make the following remark.

\begin{remark} \label{remark:supp}
 $\supp~\hA\hB\hA = \supp~\hA$ holds with positive semidefinite operators $\hA$ and $\hB$
 satisfying $\supp~\hA \subseteq \supp~\hB$.
\end{remark}

\begin{proof}
 For any positive semidefinite operator $\hC$, $\supp~\hC$ can be expressed as
 \begin{eqnarray}
  \supp~\hC &=& \{ \ket{x} : \exists \alpha > 0 ~{\rm such~that}~
   \hC \ge \alpha \ket{x}\bra{x} \}. \label{eq:supp}
 \end{eqnarray}
 For any positive real number $\varepsilon$ and $\ket{x} \in \supp~\hA$, we have
 \begin{eqnarray}
  \hA \hB \hA - \varepsilon \ket{x}\bra{x}
  &=& \hA (\hB - \varepsilon \hA^+ \ket{x}\bra{x} \hA^+) \hA, \label{eq:ABA}
 \end{eqnarray}
 where $\hA^+$ denotes the Moore-Penrose inverse operator of $\hA$.
 Since $\hA^+\ket{x} \in \supp~\hA \subseteq \supp~\hB$,
 if $\varepsilon$ is sufficiently small, then
 $\hB \ge \varepsilon \hA^+ \ket{x}\bra{x} \hA^+$ holds,
 which implies that the right-hand side of Eq.~(\ref{eq:ABA}) is positive semidefinite.
 Thus, from Eq.~(\ref{eq:supp}), $\ket{x} \in \supp~\hA \hB \hA$,
 i.e., $\supp~\hA \hB \hA \supseteq \supp~\hA$, holds.
 In contrast, we obtain
 \begin{eqnarray}
  \dim(\supp~\hA \hB \hA) &=& \rank \hA \hB \hA \nonumber \\
  &\le& \rank \hA = \dim(\supp~\hA). \label{eq:dim_supp_ABA}
 \end{eqnarray}
 Therefore, $\supp~\hA\hB\hA = \supp~\hA$ holds.
 \QED
\end{proof}

\begin{lemma} \label{lemma:supp2}
 $\supp~\hA\hC\hB\hC\hA = \supp~\hA$ holds with positive semidefinite operators
 $\hA$, $\hB$, and $\hC$ satisfying $\supp~\hA \subseteq \supp~\hB$ and $\supp~\hA \subseteq \supp~\hC$.
\end{lemma}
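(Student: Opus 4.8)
The plan is to establish the equality by comparing ranks rather than by directly nesting Remark~\ref{remark:supp}. The inclusion $\supp(\hA\hC\hB\hC\hA)\subseteq\supp\hA$ is immediate, since $\hA\hC\hB\hC\hA=(\hA\hC)\hB(\hA\hC)^\dagger$ is positive semidefinite and its range lies in $\Image\hA=\supp\hA$. For the reverse inclusion it then suffices to prove the rank equality $\rank(\hA\hC\hB\hC\hA)=\rank\hA$, after which equal dimension together with the inclusion of ranges forces $\supp(\hA\hC\hB\hC\hA)=\supp\hA$. To compute the rank I would factor $\hA\hC\hB\hC\hA=\hW\hW^\dagger$ with $\hW=\hA\hC\hB^{1/2}$, so that $\rank(\hA\hC\hB\hC\hA)=\rank\hW=\rank(\hB^{1/2}\hC\hA)$, and the whole problem reduces to showing $\Ker(\hB^{1/2}\hC\hA)=\Ker\hA$.

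The containment $\Ker\hA\subseteq\Ker(\hB^{1/2}\hC\hA)$ is trivial, so the heart of the argument is the opposite containment. Suppose $\hB^{1/2}\hC\hA\ket{u}=0$ and set $\ket{v}=\hA\ket{u}$, so that $\ket{v}\in\supp\hA\subseteq\supp\hB\cap\supp\hC$ by hypothesis. From $\hB^{1/2}\hC\ket{v}=0$ I get $\hC\ket{v}\in\Ker\hB=(\supp\hB)^\perp$, while $\ket{v}\in\supp\hB$; hence $\bra{v}\hC\ket{v}=0$. Because $\hC\in\mS_+$, this gives $\hC^{1/2}\ket{v}=0$, i.e.\ $\ket{v}\in\Ker\hC=(\supp\hC)^\perp$. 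But $\ket{v}\in\supp\hA\subseteq\supp\hC$, so $\ket{v}\in\supp\hC\cap(\supp\hC)^\perp=\{0\}$, whence $\hA\ket{u}=\ket{v}=0$ and $\ket{u}\in\Ker\hA$. This proves $\Ker(\hB^{1/2}\hC\hA)=\Ker\hA$, hence $\rank(\hB^{1/2}\hC\hA)=\rank\hA$ and the desired rank equality.

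I expect the main obstacle to be resisting the natural temptation to apply Remark~\ref{remark:supp} by nesting: writing $\hA\hC\hB\hC\hA=\hA(\hC\hB\hC)\hA$ and invoking the Remark with middle operator $\hC\hB\hC$ does not work, because $\supp(\hC\hB\hC)$ need not contain $\supp\hA$ (a full-rank $\hC$ can rotate $\supp\hA$ off $\supp\hB$), and the inner grouping $\hC(\hB)\hC$ would require $\supp\hC\subseteq\supp\hB$, which is not assumed. The genuine content therefore lies in the kernel computation above, which is precisely where both hypotheses $\supp\hA\subseteq\supp\hB$ and $\supp\hA\subseteq\supp\hC$ are used together to prevent $\hC$ from mapping $\supp\hA$ outside $\supp\hB$; everything else is routine bookkeeping with ranks and ranges.
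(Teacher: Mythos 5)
Your proof is correct, but it follows a genuinely different route from the paper's. The paper establishes the hard inclusion $\supp\hA\subseteq\supp(\hA\hC\hB\hC\hA)$ by first converting the support hypothesis $\supp\hA\subseteq\supp\hB$ into an operator inequality $\hB\ge\delta\hA^2$ (with $\delta$ built from the minimum positive eigenvalue of $\hB$ and the maximum eigenvalue of $\hA$), which gives $\hA\hC\hB\hC\hA\ge\delta(\hA\hC\hA)^2$ and hence $\supp(\hA\hC\hB\hC\hA)\supseteq\supp(\hA\hC\hA)$; it then invokes Remark~\ref{remark:supp} with the pair $(\hA,\hC)$ to identify $\supp(\hA\hC\hA)=\supp\hA$, and finishes with the same rank bound you use for the easy inclusion. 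You instead bypass Remark~\ref{remark:supp} entirely: writing $\hA\hC\hB\hC\hA=\hW\hW^\dagger$ with $\hW=\hA\hC\hB^{1/2}$ reduces everything to the kernel identity $\Ker(\hB^{1/2}\hC\hA)=\Ker\hA$, which you prove by a clean two-step orthogonality argument ($\bra{v}\hC\ket{v}=0$ forces $\hC\ket{v}=0$, then $\supp\hC\cap(\supp\hC)^\perp=\{0\}$). Your version is self-contained, avoids the eigenvalue-dependent constant $\delta$, and makes visibly explicit where each of the two hypotheses enters; the paper's version is shorter given that Remark~\ref{remark:supp} is already in hand and reuses the same ``operator inequality implies support inclusion'' mechanism that appears elsewhere in the appendix. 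Your side remark is also well taken: one cannot simply nest Remark~\ref{remark:supp} with middle operator $\hC\hB\hC$, since $\supp\hA\subseteq\supp(\hC\hB\hC)$ is not among the hypotheses and would itself require an argument essentially equivalent to your kernel computation.
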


\begin{proof}
 From $\supp~\hA \subseteq \supp~\hB$, there exists a positive real number $\delta$
 such that $\hB \ge \delta \hA^2$.
 Indeed, we obtain
 \begin{eqnarray}
  \hB &\ge& \sigma_{B,\min} \hP_A \ge \delta \hA^2
 \end{eqnarray}
 with $\delta \le \sigma_{B,\min} / \sigma_{A,\max}^2$,
 where $\hP_A$, $\sigma_{B,\min}$, and $\sigma_{A,\max}$
 are respectively the projection operator onto $\supp~\hA$,
 the minimum positive eigenvalue of $\hB$,
 and the maximum eigenvalue of $\hA$.
 Thus, we obtain
 \begin{eqnarray}
  \hA\hC\hB\hC\hA &\ge& \delta \hA\hC\hA^2\hC\hA = \delta (\hA\hC\hA)^2,
   \label{eq:ACBCA_ACA2}
 \end{eqnarray}
 which yields $\supp~\hA\hC\hB\hC\hA \supseteq \supp~\hA\hC\hA$.
 In contrast, $\supp~\hA\hC\hA = \supp~\hA$ holds from Remark~\ref{remark:supp}.
 Therefore, $\supp~\hA\hC\hB\hC\hA \supseteq \supp~\hA$ holds.
 Moreover, similar to Eq.~(\ref{eq:dim_supp_ABA}),
 we have $\dim(\supp~\hA\hC\hB\hC\hA) \le \dim(\supp~\hA)$,
 and thus $\supp~\hA\hC\hB\hC\hA = \supp~\hA$ holds.
 \QED
\end{proof}

\section{Monotonically increasing property of $\beta_j$ with respect to $\lambda$} \label{append:lambda_vs_beta}

%Let $\Pi^\bullet(\lambda)$ be an optimal solution to problem~P2
%with respect to $\lambda \in \Real_+^J$.
%In the following lemma, we show that $\beta_k[\Pi^\bullet(\lambda)]$ is monotonically increasing
%with respect to $\lambda_k$ for any $k \in \mI_J$.

\begin{lemma} \label{lemma:lambda_vs_beta}
 Let $\lambda, \lambda' \in \Real_+^J$ satisfy
 $\lambda_k < \lambda_k'$ and $\lambda_j = \lambda_j'$ $~(\mI_J \ni j \neq k)$ for a certain $k \in \mI_J$.
 Also, let $\Pi$ and $\Pi'$ be optimal solutions for problem~P2
 with respect to $\lambda$ and $\lambda'$, respectively.
 Then, $\beta_k(\Pi) \le \beta_k(\Pi')$ holds.
\end{lemma}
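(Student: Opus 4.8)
The plan is to exploit the fact that the modified objective $g(\Pi;\lambda)$ depends \emph{affinely} on the multiplier vector $\lambda$, and to combine the optimality of $\Pi$ and of $\Pi'$ for their respective multipliers through the standard ``add the two optimality inequalities'' trick familiar from variational-inequality/monotonicity arguments. First I would rewrite the objective so that the $\beta_j$ terms appear explicitly. Using the definition of $\hz_m(\lambda)$ in Eq.~(\ref{eq:zm}),
\begin{eqnarray}
 g(\Pi; \lambda) &=& \sum_{m=0}^{M-1} \Tr(\hc_m \hPi_m)
  + \sum_{j=0}^{J-1} \lambda_j \sum_{m=0}^{M-1} \Tr(\ha_{j,m} \hPi_m) \nonumber \\
 &=& f(\Pi) + \sum_{j=0}^{J-1} \lambda_j \beta_j(\Pi),
\end{eqnarray}
and the same identity holds with $\lambda$ replaced by $\lambda' \in \Real_+^J$.

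Next I would invoke optimality. Since $\Pi$ is an optimal solution to problem~P2 with respect to $\lambda$ and $\Pi' \in \POVM_M$, we have $g(\Pi;\lambda) \ge g(\Pi';\lambda)$; likewise the optimality of $\Pi'$ with respect to $\lambda'$ gives $g(\Pi';\lambda') \ge g(\Pi;\lambda')$. Writing both inequalities in the form above and adding them, the contributions $f(\Pi)$ and $f(\Pi')$ appear identically on both sides and cancel, leaving
\begin{eqnarray}
 \sum_{j=0}^{J-1} (\lambda_j - \lambda'_j)\left[ \beta_j(\Pi) - \beta_j(\Pi') \right] &\ge& 0.
\end{eqnarray}

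Finally I would use the hypothesis that $\lambda$ and $\lambda'$ agree in every coordinate except the $k$-th: each term with $j \neq k$ vanishes, so the sum collapses to $(\lambda_k - \lambda'_k)[\beta_k(\Pi) - \beta_k(\Pi')] \ge 0$. Because $\lambda_k < \lambda'_k$, the factor $\lambda_k - \lambda'_k$ is strictly negative, and dividing by it reverses the inequality to give $\beta_k(\Pi) \le \beta_k(\Pi')$, which is the claim.

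I do not expect a genuine obstacle here; the result is a pure monotonicity statement that follows from optimality together with the affine dependence of $g$ on $\lambda$, with no need to use any structure of the operators $\hc_m$ and $\ha_{j,m}$ beyond what is already packaged into $f$ and the $\beta_j$. The only points requiring care are purely bookkeeping: verifying that the shared components of $\lambda$ and $\lambda'$ drop out cleanly after adding the two inequalities, and tracking the direction of the final inequality when dividing through by the negative quantity $\lambda_k - \lambda'_k$.
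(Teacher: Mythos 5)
Your proof is correct and rests on exactly the same ingredients as the paper's: the affine dependence of $g(\Pi;\lambda)$ on $\lambda$ (the paper writes this as $g(\Phi;\lambda') = g(\Phi;\lambda) + (\lambda'_k-\lambda_k)\beta_k(\Phi)$) combined with the two optimality inequalities for $\Pi$ and $\Pi'$. The only difference is presentational: the paper argues by contradiction from the assumption $\beta_k(\Pi) > \beta_k(\Pi')$, whereas you add the two inequalities directly and divide by $\lambda_k - \lambda'_k < 0$, which is arguably the cleaner arrangement of the same argument.
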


\begin{proof}
 Suppose for contradiction that $\beta_k(\Pi) > \beta_k(\Pi')$.
 From the definition of $\hz_m(\lambda)$ in Eq.~(\ref{eq:zm}),
 we have that for any POVM $\Phi \in \POVM_M$,
 \begin{eqnarray}
  g(\Phi; \lambda') &=& \sum_{m=0}^{M-1} \Tr[\hz_m(\lambda') \hPhi_m] \nonumber \\
  &=& \sum_{m=0}^{M-1} \Tr \left[ [\hz_m(\lambda) + (\lambda'_k - \lambda_k) \ha_{k,m}] \hPhi_m \right] \nonumber \\
  &=& g(\Phi; \lambda) + (\lambda'_k - \lambda_k) \beta_k(\Phi). \label{eq:g_lambda'}
 \end{eqnarray}
 Thus, we obtain
 \begin{eqnarray}
  g(\Pi'; \lambda') &=& g(\Pi'; \lambda)
   + (\lambda'_k - \lambda_k) \beta_k(\Pi') \nonumber \\
  &\le& g(\Pi; \lambda) + (\lambda'_k - \lambda_k) \beta_k(\Pi') \nonumber \\
  &<& g(\Pi; \lambda) + (\lambda'_k - \lambda_k) \beta_k(\Pi) \nonumber \\
  &=& g(\Pi; \lambda'), \label{eq:lambda_vs_beta_g}
 \end{eqnarray}
 where the first and fourth lines follow from Eq.~(\ref{eq:g_lambda'}).
 The second line follows from $\Pi$ being an optimal solution to problem~P2
 with respect to $\lambda$.
 Equation~(\ref{eq:lambda_vs_beta_g}) contradicts the assumption that
 $\Pi'$ is an optimal solution to problem~P2 with respect to $\lambda'$.
 \QED
\end{proof}

%\bibliography{quant}
%merlin.mbs apsrev4-1.bst 2010-07-25 4.21a (PWD, AO, DPC) hacked
%Control: key (0)
%Control: author (72) initials jnrlst
%Control: editor formatted (1) identically to author
%Control: production of article title (-1) disabled
%Control: page (0) single
%Control: year (1) truncated
%Control: production of eprint (0) enabled
%

\end{document}